\documentclass[12pt]{article}
\usepackage{amsmath}
\usepackage{graphicx}
\usepackage{enumerate}
\usepackage{natbib}
\usepackage{url} % not crucial - just used below for the URL 

\usepackage{bm}
\usepackage{amssymb}

\newcommand{\indep}{\perp \!\!\! \perp}
\DeclareMathOperator*{\argmax}{arg\,max}

\usepackage{algorithm}
\usepackage[noend]{algpseudocode}
\usepackage{amssymb}
\usepackage{amsfonts}
\usepackage{mathrsfs}
\usepackage{amsthm}
\usepackage{multirow}
\usepackage{booktabs}
\usepackage{verbatim}
\usepackage{array}
\usepackage{caption}
\usepackage{setspace}
\usepackage{fixltx2e}% for \textsubscript
\usepackage[compact]{titlesec}
\usepackage{siunitx} %align numbers by decimal point
%\pdfminorversion=4
% NOTE: To produce blinded version, replace "0" with "1" below.
\newcommand{\blind}{1}

% DON'T change margins - should be 1 inch all around.
\addtolength{\oddsidemargin}{-.5in}%
\addtolength{\evensidemargin}{-1in}%
\addtolength{\textwidth}{1in}%
\addtolength{\textheight}{1.7in}%
\addtolength{\topmargin}{-1in}%

\newtheorem{theorem}{Theorem}
\newtheorem{lemma}[theorem]{Lemma}

\begin{document}

\def\spacingset#1{\renewcommand{\baselinestretch}%
{#1}\small\normalsize} \spacingset{1}

%%%%%%%%%%%%%%%%%%%%%%%%%%%%%%%%%%%%%%%%%%%%%%%%%%%%%%%%%%%%%%%%%%%%%%%%%%%%%%

\if1\blind
{
  \title{\bf Adaptive Weight Learning for Multiple Outcome Optimization with Continuous Treatment}
  \author{Chang Wang, Lu Wang\\
    Department of Biostatistics, University of Michigan}
  \maketitle
} \fi

\if0\blind
{
  \bigskip
  \bigskip
  \bigskip
  \begin{center}
    {\bf Optimal Dose Finding for Composite Outcomes/Optimization of Composite outcomes with Continuous Treatment}
\end{center}
  \medskip
} \fi

\bigskip

\begin{abstract}

 %We apply xxxx-Learning to xxxx data to evaluate the dose assignments of adaptive radiation therapy and estimate the global optimal dynamic dosage regime.

To promote precision medicine, individualized treatment regimes (ITRs) are crucial for optimizing the expected clinical outcome based on patient-specific characteristics. However, existing ITR research has primarily focused on scenarios with categorical treatment options and a single outcome. In reality, clinicians often encounter scenarios with continuous treatment options and multiple, potentially competing outcomes, such as medicine efficacy and unavoidable toxicity. To balance these outcomes, a proper weight is necessary, which should be learned in a data-driven manner that considers both patient preference and clinician expertise. 
In this paper, we present a novel algorithm for developing individualized treatment regimes (ITRs) that incorporate continuous treatment options and multiple outcomes, utilizing observational data. Our approach assumes that clinicians are optimizing individualized patient utilities with sub-optimal treatment decisions that are at least better than random assignment. Treatment assignment is assumed to directly depend on the true underlying utility of the treatment rather than patient characteristics. The proposed method simultaneously estimates the weighting of composite outcomes and the decision-making process, allowing for construction of individualized treatment regimes with continuous doses. The proposed method's estimators can be used for inference and variable selection, facilitating the identification of informative treatment assignments and preference-associated variables. We evaluate the finite sample performance of our proposed method via simulation studies and apply it to a real data application of radiation oncology analysis.

\end{abstract}

\noindent%
{\it Keywords:}  Individualized treatment regime, Continuous/dosage treatment, Composite outcome, Precision medicine
\vfill

\newpage
\spacingset{1.9} % DON'T change the spacing!

\section{Introduction}

Precision medicine offers a more targeted and effective approach to healthcare by considering an individual's genetic makeup, lifestyle, and environmental factors \citep{precision_med}. Individualized Treatment Regimes (ITRs) play a crucial role in precision medicine, as they offer personalized recommendations for treatment based on a patient's unique characteristics. This helps to optimize the desired clinical outcome. However, a limitation of many ITRs is that they are designed to address a single outcome, whereas real-world scenarios often require balancing multiple, sometimes conflicting outcomes, such as medication efficacy and toxicity. Furthermore, they may not be suitable for continuous treatment scenarios. 

Despite extensive research on ITRs, much of the literature focuses on single-outcome scenarios. A variety of common estimators, such as Q-learning \citep{Qian_murphy,Laber_Q}, A-learning \citep{Alearning}, inverse propensity weighting \citep{IPW} and tree-based algorithm \citep{STRL,LZ,godotree}, can be used for optimizing treatment regimes with a single scalar objective function. When multiple outcomes are considered, a composite utility function is often used to simplify the problem into a single outcome scenario. Unfortunately, these composite utilities are typically unknown and may be hard to specify due to heterogeneity among patients and disagreement between patients and clinicians. To address this issue, a data-driven weight learning approach is desirable, which incorporates both patient preferences and clinician expertise to balance the weights of multiple outcomes.

Some existing methods for handling multiple outcomes, such as split questions, inverse reinforcement learning and bivariate continual reassessment method (bCRM), have limitations. For example, the question-splitting approach \citep{yingchao,laber_split} estimates individualized preferences through patient self-report questionnaires and then performs reinforcement learning with the estimated objective function. This approach neglects observed treatment assignment mechanisms, does not incorporate expert information, and requires a well-designed patient self-report questionnaire, which is often not feasible in real-world applications. Inverse reinforcement learning \citep{IRL_NG,IRL_Martin} also has limitations, as it assumes an unknown but fixed reward function and cannot incorporate patient-specific preferences in the model setting. Methods like bCRM \citep{bCRM} can balance medicine efficacy and toxicity, but they usually only focus on finding the maximum tolerated dose rather than optimal composite outcomes and patient preference is not considered. Besides, since our research interests are limited to observational study, discussions about clinical trial relevant methods like \cite{clinical} are omitted.

While Pareto optimality is commonly employed in multi-objective optimization, its utility in optimal dose finding is often limited. Pareto method maintains the separation of elements in the objectives throughout the decision optimization process \citep{gunantara} and leverage Pareto dominance to discern the Pareto front as potential solutions \citep{pareto}. Although these methods can reduce the number of decisions, they fall short in defining the singular optimal solution. Moreover, in clinical practice, multiple outcomes may exhibit monotonic behavior concerning the assigned dose. For instance, both toxicity and efficacy may increase with medication dose. This results in every dose being Pareto optimal, rendering it challenging to achieve a reduction in the decision space.

With the advancement of precision healthcare, ITRs with continuous dosage treatment are also becoming increasingly important. Examples of such scenarios include optimal dose finding in radiation oncology therapy and new drug trials. Although there are many algorithms designed for optimal dose finding \citep{CZK,LZ}, none of them can incorporate composite outcomes with an unknown utility function. This work is closely related to \cite{Laber_category}, but proposes a novel treatment assignment mechanism that allows for the optimization of composite outcomes with continuous treatment. The proposed algorithm also has a concise expression for estimation and inference.

The following pages are organized as follows, in Section 2, we propose a pseudo-likelihood method to estimate patient utility functions from observational data with categorical or continuous treatment and incorporate composite outcomes with unknown utility functions. We present theoretical results about the proposed method in Section 3, evaluate its performance in simulation experiments in Section 4, and apply it to a real-data application from the radiation oncology study in Section 5. Proofs and additional simulation results are provided in the appendix, along with a discussion of extending the method to more than two outcomes.

\section{Method}
Denote the independent and identically distributed observed data are $(\bm{X}_i,A_i,Y_i,Z_i), i=1,...,n$, where $\bm{X}_i \in \mathcal{X} \subseteq \mathbb{R}^p$ are patient covariates, $A_i \in \mathcal{A}$ is a continuous treatment or exposure, and $Y_i$ and $Z_i$ denote two real value outcome of interests for which higher values are more desirable. 

An individualized treatment rule is a function $d: \mathcal{X} \rightarrow \mathcal{A}$ such that patients with covariates $\bm{X} = \bm{x}$ will be assigned to treatment $d(\bm{x})$.

Let $Y^*(a)$ denotes the counterfactual outcome when patient takes the treatment $a$ and for any regime $d$, define $Y^*(d) = Y^*\{d(\bm{X})\}$. A regime for the outcome is define as the optimal $d_{Y}^{opt}$ when $\mathbb{E} Y^*(d_Y^{opt}) \geq \mathbb{E} Y^*(d)$ for any other regime $d$. The optimal regime $d_{Z}^{opt}$ for the outcome $Z$ is defined analogously. To identify optimal DTRs, we make the standard assumptions to link the distribution law of counterfactual data with that of observational data \citep{IPW}.

\newtheorem{assumption}{Assumption}

\begin{assumption}[Consistency]
The observed outcome is the same as the counterfactual outcome under the assigned treatment: $Y = Y^*(A),Z = Z^*(A),$.
\end{assumption}

\begin{assumption}[Positivity]
The conditional probability density function of treatment $f_{A\mid \bm{X}}(a\mid \bm{x}) \geq \epsilon >0$, $\forall (\bm{x},a) \in \mathcal{X} \times \mathcal{A}$. 
\end{assumption}

\begin{assumption}[Ignorability]
$A \indep (\mathcal{Y},\mathcal{Z}) \mid  \bm{X}$, where $\indep$ denotes statistical independence and $\mathcal{Y} = \{Y^*(a): a \in \mathcal{A} \}$, $\mathcal{Z} = \{Z^*(a): a \in \mathcal{A} \}$, also known as no unmeasured confounder assumption (NUCA).
\end{assumption}

We also assume the continuity of the counterfactual outcome mean, i.e., $\mathbb{E}Y^*(a)$ is continuous with respect to $a \in \mathcal{A}$. Furthermore, we require the Stable Unit Treatment Value Assumption (SUTVA), which means that there is no interference between units and there are no multiple versions of treatment \citep{SUTVA}.

Define $Q_Y(\bm{x},a) = \mathbb{E} (Y \mid \bm{X} = \bm{x}, A=a)$ as the conditional mean expectation. Then under the preceding assumptions, it can be shown that $d_Y^{opt}(\bm{x}) = \argmax_{a\in \mathcal{A}} Q_Y(\bm{x},a)$. Since $d_Y^{opt}(\bm{x})$ is not necessarily equal to $d_Z^{opt}(\bm{x})$, neither of them may be optimal when both outcomes are of interest. 

We assume there exists a composite outcome which consists of multiple outcomes and is possibly patient-specific. Denote the composite outcome as $U=u(Y,Z;X)$, where $u(\cdot,\cdot;\bm{X}): \mathbb{R}^2 \rightarrow  \mathbb{R}$ is a mapping from multiple outcomes to a scalar measurement of optimality and may be covariate-dependent on patient characteristics $\bm{X}$. The optimal treatment regime relevant to $U$ is thus defined as $d_U^{opt}(\bm{x})$, when $\mathbb{E}u(Y^*(d_U^{opt}),Z^*(d_U^{opt});\bm{x})\geq \mathbb{E}u(Y^*(d),Z^*(d);\bm{x})$, for any other regime $d$.

Here we consider the scenario that composite outcome is characterized as a weighted version of multiple outcome $Q_{\theta}(\bm{x},a) = w(\bm{x};\theta) Q_{Y}(\bm{x},a) + \{1-w(\bm{x};\theta)\} Q_{Z}(\bm{x},a)$, where the individualized weight $w(\bm{x};\theta)$ is parameterized by unknown parameter $\theta$.

To ensure that our proposed model is valid, we impose several key assumptions. First, we assume that clinicians optimize each patient's counterfactual outcomes after confirming their individualized preferences. We also assume that the observed clinicians' decisions are imperfect and at least better than random assignment. Second, we assume that the treatment assignment mechanism is a function of the composite outcome $Q_{\theta}(\bm{x},a)$, which is determined by both clinicians and patients, rather than an arbitrary function of patient characteristics. This assumption differs from the procedure proposed by \cite{Laber_category}, which assumes that $Pr\{A=d_U^{opt}(\bm{x})\mid \bm{X}=\bm{x}\}=expit(\bm{x}^T\beta)$ and can only be applied to categorical treatment.

We propose the following explicit expression for the observed treatment assignment made by clinicians:
\begin{equation}\label{A}
    f_{A\mid \bm{X}}(A=a \mid \bm{X} = \bm{x}) = 
\frac{exp\{ \beta  Q_{\theta}(\bm{x},a) \}}{ \int_{\mathcal{A}} exp\{\beta   Q_{\theta}(\bm{x},t) \} d\nu(t) },
\end{equation}
where $\beta$ is a scalar parameter measuring the optimality of observed assignment and $w(\bm{x};\theta)=expit(\bm{x}^T\theta) \in [0,1]$ is describing the individualized preference about multiple outcomes, which is an essential part of the composite outcome $Q_{\theta}(\bm{x},a)$. For continuous treatment, we define $\int_{\mathcal{A}} exp\{\beta   Q_{\theta}(\bm{x},t) \} d\nu(t) = \int_{\mathcal{A}} exp\{\beta   Q_{\theta}(\bm{x},t) \} dt$ and for categorical treatment, we define $\int_{\mathcal{A}} exp\{\beta   Q_{\theta}(\bm{x},t) \} d\nu(t) = \sum_{t \in \mathcal{A}} exp\{\beta   Q_{\theta}(\bm{x},t) \}$, so as to make the method compatible with both types of treatments.

Because the joint distribution is $f(\bm{X},A,Y,Z) = f(Y,Z \mid \bm{X},A) f(A \mid \bm{X}) f(\bm{X})$, assuming that $f(Y,Z \mid \bm{X},A) $ and $ f(\bm{X})$ is not relevant to the parameters of interest $(\theta,\beta)$, we propose the estimator $(\hat{\theta}_n, \hat{\beta}_n)$ of $(\theta, \beta)$ by maximizing the pseudo-likelihood 

$$ \mathcal{L}_n(\theta,\beta) = \prod_{i=1}^{n}  \frac{exp\{ \beta  Q_{\theta}(\bm{X}_i,A_i) \}}{ \int_{\mathcal{A}} exp\{\beta   Q_{\theta}(\bm{X}_i,t) \} d\nu(t) }.$$

Let $\hat{Q}_{Y}$ and $\hat{Q}_{Z}$ denote estimators of $Q_{Y}$ and $Q_{Z}$ obtained from model fitting to the observed data. For a fixed value of $\theta$, let $\hat{Q}_{\theta}(\bm{x},a) = w(\bm{x};\theta) \hat{Q}_{Y}(\bm{x},a) + \{1-w(\bm{x};\theta)\} \hat{Q}_{Z}(\bm{x},a)$ and correspondingly let $\hat{d}_{\theta}^{opt}(\bm{x}) = \argmax_{a\in \mathcal{A}} \hat{Q}_{\theta}(\bm{x},a)$ be the plug-in estimator of $d_{\theta}^{opt}$. The following estimated pseudo-likelihood is smooth about parameter $(\theta,\beta)$ and any standard gradient-based optimization algorithms can be used to solve the optimizer for $\hat{\mathcal{L}}_n(\theta,\beta)$.

$$ \hat{\mathcal{L}}_n(\theta,\beta) = \prod_{i=1}^{n}  \frac{exp\{ \beta  \hat{Q}_{\theta}(\bm{X}_i,A_i) \}}{ \int_{\mathcal{A}} exp\{\beta   \hat{Q}_{\theta}(\bm{X}_i,t) \} d\nu(t) }.$$

With the estimated $(\hat{\theta},\hat{\beta})$, we can estimate the individualized preference $w(\bm{x};\hat{\theta})$ and further the personalized composite outcome $\hat{Q}_{\hat{\theta}}(\bm{x},a) = w(\bm{x};\hat{\theta}) \hat{Q}_{Y}(\bm{x},a) + \{1-w(\bm{x};\hat{\theta})\} \hat{Q}_{Z}(\bm{x},a)$. Correspondingly, $\hat{d}_{\hat{\theta}}^{opt}(\bm{x}) = \argmax_{a\in \mathcal{A}} \hat{Q}_{\hat{\theta}}(\bm{x},a)$ is the finalized estimator for optimal treatment regime $d_{\theta}^{opt}$.

The algorithm to construct $(\hat{\theta},\hat{\beta})$ and $d_{\theta}^{opt}(\bm{x})$ is given in Algorithm 1 below.

\begin{algorithm} 
\caption{Individual treatment regime estimation for composite outcome}
\begin{algorithmic}[1] 

    \State  Fit conditional outcome mean models $\hat{Q}_Y(\bm{x},a)$ and $\hat{Q}_Z(\bm{x},a)$ respectively for 
    $$Q_Y(\bm{x},a) = \mathbb{E} (Y \mid \bm{X} = \bm{x}, A=a), Q_Z(\bm{x},a) = \mathbb{E} (Z \mid \bm{X} = \bm{x}, A=a).$$ 
    
    \State Use gradient-based optimization or other algorithms to solve the optimizer $(\hat{\theta},\hat{\beta})$ to maximize the pseudo-likelihood $\hat{\mathcal{L}}_n(\theta,\beta)$: 
    $$ \hat{\mathcal{L}}_n(\theta,\beta) = \prod_{i=1}^{n}  \frac{exp\{ \beta  \hat{Q}_{\theta}(\bm{X}_i,A_i) \}}{ \int_{\mathcal{A}} exp\{\beta   \hat{Q}_{\theta}(\bm{X}_i,t) \} d\nu(t) }.$$
    
    \State Estimate the composite outcome: $$\hat{Q}_{\hat{\theta}}(\bm{x},a) = w(\bm{x};\hat{\theta}) \hat{Q}_{Y}(\bm{x},a) + \{1-w(\bm{x};\hat{\theta})\} \hat{Q}_{Z}(\bm{x},a).$$
       
    \State Calculate the finalized estimator for optimal treatment regime $d_{\theta}^{opt}(\bm{x})$: 
    $$\hat{d}_{\hat{\theta}}^{opt}(\bm{x}) = \argmax_{a\in \mathcal{A}} \hat{Q}_{\hat{\theta}}(\bm{x},a).$$

\end{algorithmic}
\end{algorithm}

\begin{assumption} \label{general}

The following conditions hold.

1, $\beta \in \mathcal{B} \subset \mathbb{R}$ and $\theta \in \Theta \subset \mathbb{R}^p$, where $\mathcal{B}$ and $\Theta$ are compact.

2, The treatment relevant parameter $\beta > 0$ , which also means the treatment assignment is at least better than random assignment. Treatment that leads to a better composite outcome should have a higher probability to be chosen.   

3, The weight function $w(\bm{x}; \theta)$ is twice continuously differentiable with bounded first and second derivative $w'(\bm{x}; \theta)$ and $w''(\bm{x}; \theta)$.

4, Covariate vector $\bm{x} \in \mathcal{X} \subset \mathbb{R}^q$ and treatment $A \in \mathcal{A} \subset \mathbb{R}$, where $\mathcal{X}$ and $\mathcal{A}$ are compact. The outcomes $Q_Y(\bm{x},A)$ and $Q_Y(\bm{x},A)$ is bounded. 

\end{assumption}

\begin{assumption}[Consistent outcome estimator]\label{consistentQ}\  
$\sup_{a \in \mathcal{A}} | Q_{Y}(\bm{x},a) - \hat{Q}_{Y}(\bm{x},a)| \xrightarrow{p} 0$,
$\sup_{a \in \mathcal{A}} | Q_{Z}(\bm{x},a) - \hat{Q}_{Z}(\bm{x},a)| \xrightarrow{p} 0$ and
$\sup_{a \in \mathcal{A}} | Q_{\theta}(\bm{x},a) - \hat{Q}_{\theta}(\bm{x},a)| \xrightarrow{p} 0$, uniformly for $\bm{x} \in \mathcal{X}$. Additionally, we assume $\sup_{a \in \mathcal{A}} | Q_{\theta}(\bm{x},a) - \hat{Q}_{\theta}(\bm{x},a)| = o_p(1/\sqrt{n})$.
\end{assumption}

\begin{assumption}[Identifiable optimal treatment]\label{identifyQ}
 $\liminf_{n \rightarrow \infty} Q_{\theta}(\bm{x},a_n) \geq Q_{\theta}(\bm{x}, d_{\theta}^{opt}(\bm{x}))$ implies $\mid a_n - d_{\theta}^{opt}(\bm{x})\mid \xrightarrow{p} 0$ for any sequence $\{ a_n\} \in \mathcal{A}$.
\end{assumption}

\newtheorem*{remark}{Remark}

\begin{remark}
Assumption \ref{consistentQ} requires the convergence rate for the outcome fitting to be $o_p(n^{-\frac{1}{2}})$, which is typically satisfied when regular regression methods such as linear or generalized linear models are employed. While not essential for maximizing the ultimate value function, Assumption \ref{identifyQ} is necessary for identifying the optimal decision rule.
\end{remark}

\begin{assumption}[Identifiable patient preference]\label{identifyw}

Let $\mathcal{X}_S$ be the subsets of $\mathcal{X}$ defined as: $\mathcal{X}_S=\{\bm{x} \in \mathcal{X}: Cov_{A} 
[Q_{Y}(\bm{X},A),
 Q_{Z}(\bm{X},A)|\bm{X} ]$ is full rank \}, then $Pr(\bm{X} \in \mathcal{X}_S) > 0$ and $\mathbb{E} (\bm{X}\bm{X}^T|\bm{X} \in \mathcal{X}_S)$ is full rank.
\end{assumption}

\begin{remark}
    For patients who do not belong to $\mathcal{X}_S$, it can be demonstrated that their two outcomes are either perfectly correlated, or one outcome is constant with respect to treatment. Consequently, the problem of multiple outcomes can be reduced to a single outcome, and no preference information can be obtained.
\end{remark}

\begin{theorem}[\textbf{Identifiability}] \label{identify}

Under Assumption \ref{identifyw}, $(\theta_0,\beta_0)$ is uniquely identifiable under the model given by $\mathcal{L}_n(\theta,\beta)$.
\end{theorem}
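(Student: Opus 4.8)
The statement to establish is injectivity of the map $(\theta,\beta)\mapsto f_{A\mid\bm{X}}(\cdot\mid\cdot;\theta,\beta)$ defined by \eqref{A}. The plan is to assume that two parameter values $(\theta_1,\beta_1)$ and $(\theta_2,\beta_2)$ in $\Theta\times\mathcal{B}$ induce the same conditional density (for every $\bm{x}$ in the support and $\nu$-almost every $a$) and to deduce $\theta_1=\theta_2$ and $\beta_1=\beta_2$. The first move is to take logarithms in \eqref{A}: since the normalizing integral does not involve $a$, the equality of densities reduces to
\[
\beta_1 Q_{\theta_1}(\bm{x},a)-\beta_2 Q_{\theta_2}(\bm{x},a)=g(\bm{x}),
\]
where $g(\bm{x})$ absorbs the difference of the two log-normalizers and is free of $a$.

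Next I would substitute the weighted form $Q_{\theta}(\bm{x},a)=w(\bm{x};\theta)Q_Y(\bm{x},a)+\{1-w(\bm{x};\theta)\}Q_Z(\bm{x},a)$ and regroup. Writing $w_j=w(\bm{x};\theta_j)$, the identity above becomes a fixed linear combination of $Q_Y(\bm{x},\cdot)$ and $Q_Z(\bm{x},\cdot)$ that is constant in $a$:
\[
\{\beta_1 w_1-\beta_2 w_2\}\,Q_Y(\bm{x},a)+\{\beta_1(1-w_1)-\beta_2(1-w_2)\}\,Q_Z(\bm{x},a)=g(\bm{x}).
\]
The key step is to invoke Assumption~\ref{identifyw}: for $\bm{x}\in\mathcal{X}_S$ the matrix $\mathrm{Cov}_A[Q_Y(\bm{X},A),Q_Z(\bm{X},A)\mid\bm{X}]$ is full rank, which says precisely that no nontrivial linear combination $c_1 Q_Y(\bm{x},A)+c_2 Q_Z(\bm{x},A)$ has zero conditional variance, hence none is constant in $a$ on the support of $A\mid\bm{X}$. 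I would conclude that both coefficients vanish, $\beta_1 w_1-\beta_2 w_2=0$ and $\beta_1(1-w_1)-\beta_2(1-w_2)=0$. Adding these gives $\beta_1=\beta_2$, and because $\beta_1=\beta_2>0$ by Assumption~\ref{general}.2, the first then forces $w_1=w_2$, i.e.\ $w(\bm{x};\theta_1)=w(\bm{x};\theta_2)$ for every $\bm{x}\in\mathcal{X}_S$.

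To finish, I would use strict monotonicity of $\mathrm{expit}$ to rewrite $w(\bm{x};\theta_1)=w(\bm{x};\theta_2)$ as $\bm{x}^{T}(\theta_1-\theta_2)=0$ for every $\bm{x}\in\mathcal{X}_S$. Setting $\delta=\theta_1-\theta_2$ and taking the conditional second moment yields $\delta^{T}\,\mathbb{E}(\bm{X}\bm{X}^{T}\mid\bm{X}\in\mathcal{X}_S)\,\delta=\mathbb{E}[(\bm{X}^{T}\delta)^2\mid\bm{X}\in\mathcal{X}_S]=0$; since $Pr(\bm{X}\in\mathcal{X}_S)>0$ makes the conditioning well defined and the matrix is full rank by Assumption~\ref{identifyw}, this forces $\delta=0$, that is $\theta_1=\theta_2$. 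Combined with $\beta_1=\beta_2$, this establishes identifiability.

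I expect the main obstacle to be making the coefficient-vanishing step airtight: one must convert the full-rank covariance statement into the claim that a linear combination of $Q_Y(\bm{x},\cdot)$ and $Q_Z(\bm{x},\cdot)$ which is $\nu$-a.e.\ constant must have both coefficients zero. This hinges on the positivity of $f_{A\mid\bm{X}}$ (so that ``constant on the support'' and ``$\nu$-a.e.\ constant'' coincide) and on the fact that the covariance is computed under exactly the treatment distribution sharing that support. The remaining algebra is routine, with $\beta>0$ and the injectivity of $\mathrm{expit}$ doing the work of disentangling $\beta$ from $\theta$.
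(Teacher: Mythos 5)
Your proposal is correct and follows essentially the same route as the paper's proof: reduce equality of the densities to a linear combination of $Q_Y(\bm{x},\cdot)$ and $Q_Z(\bm{x},\cdot)$ being constant in $a$, use the full-rank covariance in Assumption~\ref{identifyw} to force both coefficients to vanish, use $\beta>0$ to separate $\beta$ from $w$, and finish with the injectivity of $\mathrm{expit}$ together with the full-rank second-moment condition to conclude $\theta_1=\theta_2$. The only cosmetic difference is your choice of basis $\{Q_Y,Q_Z\}$ versus the paper's $\{Q_Y-Q_Z,\,Q_Z\}$, which changes nothing.
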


\section{Theorem}

\begin{assumption}\label{continuous_log}
$\sup_{(\theta,\beta)} \mid (\mathbb{E}_n - \mathbb{E}) \hat{m}(\bm{X},A; \theta, \beta) \mid \xrightarrow{p} 0$, where $\hat{m}(\theta, \beta) := \hat{m}(\bm{X},A; \theta, \beta) = \beta  \hat{Q}_{\theta}(\bm{X},A) - 
    log\int_{\mathcal{A}} exp\{\beta   \hat{Q}_{\theta}(\bm{X},t) \} d\nu(t)$ is the observed log likelihood.
\end{assumption}

\begin{remark}
    It can be shown that when the treatment only consists of a finite set of dose options, Assumption \ref{continuous_log} is satisfied. Although when the treatment dose can take on any continuous value, Assumption \ref{continuous_log} is no longer trivial. Nevertheless, in practical applications, a finite grid is often set for continuous doses, rendering this assumption still reasonable.
\end{remark}

Along with the above assumptions, we assume Equation \ref{A} is the ground truth treatment assignment mechanism and we correctly specify the patient preference model $w(\bm{x};\hat{\theta})$. The following result states the consistency of the maximum pseudo-likelihood estimators for the utility function parameter $\theta$ and treatment assignment parameter $\beta$.

\begin{theorem}[\textbf{Consistency}] \label{consistency}
Let the maximum pseudo likelihood estimators be as in section xxx, $(\hat{\theta}_n, \hat{\beta}_n) = \argmax_{\theta \in \mathbb{R}^p, \beta \in \mathcal{B}} \hat{\mathcal{L}}_n(\theta,\beta)$. Assume that $\mathcal{B}$ is a compact set with $\beta_0 \in \mathcal{B}$ and that $\lVert \mathbb{E} \bm{X} \rVert < \infty$. Then $\lVert \hat{\theta}_n -\theta_0 \rVert \xrightarrow{p} 0$ and $\lVert \hat{\beta}_n - \beta_0\rVert \xrightarrow{p} 0$ as $n \rightarrow \infty$, where $\lVert .\rVert$ is the Euclidean norm.
\end{theorem}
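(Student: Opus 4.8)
The plan is to treat this as a standard extremum-estimator consistency problem and to apply a Wald-type argmax theorem. Write the per-observation estimated log-likelihood as $\hat{m}(\bm{X},A;\theta,\beta)$ (as in Assumption \ref{continuous_log}) and its population counterpart $m(\bm{X},A;\theta,\beta) = \beta Q_{\theta}(\bm{X},A) - \log\int_{\mathcal{A}} \exp\{\beta Q_{\theta}(\bm{X},t)\}\,d\nu(t)$, which uses the true conditional means $Q_Y,Q_Z$ rather than the fitted $\hat{Q}_Y,\hat{Q}_Z$. Set $\hat{M}_n(\theta,\beta) = \mathbb{E}_n \hat{m}(\theta,\beta)$ and $M(\theta,\beta) = \mathbb{E}\, m(\theta,\beta)$, so that $(\hat{\theta}_n,\hat{\beta}_n)$ maximizes $\hat{M}_n$ (equivalently $\hat{\mathcal{L}}_n$). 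It suffices to verify three ingredients: (i) $M$ is continuous and uniquely maximized at $(\theta_0,\beta_0)$; (ii) the parameter set is compact; and (iii) $\sup_{\theta,\beta}\lvert\hat{M}_n(\theta,\beta) - M(\theta,\beta)\rvert \xrightarrow{p} 0$. Consistency then follows from the classical consistency theorem for M-estimators.

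For (i) I would invoke the information (Gibbs) inequality conditionally on $\bm{X}$. Because Equation \eqref{A} is assumed to be the true conditional law of $A\mid\bm{X}$ at parameter $(\theta_0,\beta_0)$, we have $m(\bm{x},a;\theta,\beta) = \log f_{\theta,\beta}(a\mid\bm{x})$, and hence $M(\theta_0,\beta_0) - M(\theta,\beta) = \mathbb{E}_{\bm{X}}\big[\mathrm{KL}\big(f_{\theta_0,\beta_0}(\cdot\mid\bm{X})\,\|\,f_{\theta,\beta}(\cdot\mid\bm{X})\big)\big] \ge 0$, with equality iff $f_{\theta,\beta}(\cdot\mid\bm{x}) = f_{\theta_0,\beta_0}(\cdot\mid\bm{x})$ for $P_{\bm{X}}$-almost every $\bm{x}$. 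By the identifiability result (Theorem \ref{identify}, which relies on Assumption \ref{identifyw}), this equality of conditional densities forces $(\theta,\beta) = (\theta_0,\beta_0)$, giving a well-separated unique maximizer. Continuity of $M$ follows from smoothness of $w(\bm{x};\theta)=\mathrm{expit}(\bm{x}^T\theta)$ in $\theta$, boundedness of $Q_Y,Q_Z$ and compactness of $\mathcal{A}$ (Assumption \ref{general}, with $\lVert\mathbb{E}\bm{X}\rVert<\infty$ supplying integrability), via dominated convergence applied to the log-partition term.

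The core of the argument, and the step I expect to be the main obstacle, is the uniform convergence in (iii), which must absorb the plug-in error from using $\hat{Q}$ in place of $Q$. I would split $\hat{M}_n - M = (\mathbb{E}_n - \mathbb{E})\hat{m} + \mathbb{E}(\hat{m} - m)$. The first term is $o_p(1)$ uniformly in $(\theta,\beta)$ directly by Assumption \ref{continuous_log}. For the second (bias) term I would establish a Lipschitz bound on the map $Q\mapsto m$: writing $g(Q)=\log\int_{\mathcal{A}}\exp\{\beta Q(\bm{x},t)\}\,d\nu(t)$, its G\^ateaux derivative in direction $h$ equals $\beta$ times an expectation of $h$ under the softmax density, so $\lvert g(\hat{Q}_{\theta}) - g(Q_{\theta})\rvert \le \lvert\beta\rvert\,\sup_{a}\lvert\hat{Q}_{\theta}(\bm{x},a) - Q_{\theta}(\bm{x},a)\rvert$; combined with the linear term this yields $\lvert\hat{m}-m\rvert \le 2\lvert\beta\rvert\,\sup_{a}\lvert\hat{Q}_{\theta}(\bm{x},a) - Q_{\theta}(\bm{x},a)\rvert$. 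Crucially, because $w(\bm{x};\theta)\in[0,1]$, the decomposition $Q_{\theta}-\hat{Q}_{\theta} = w(\bm{x};\theta)(Q_Y-\hat{Q}_Y) + \{1-w(\bm{x};\theta)\}(Q_Z-\hat{Q}_Z)$ gives $\sup_{a,\theta}\lvert Q_{\theta}-\hat{Q}_{\theta}\rvert \le \sup_{a}\lvert Q_Y-\hat{Q}_Y\rvert + \sup_{a}\lvert Q_Z-\hat{Q}_Z\rvert$, a bound uniform in $\theta$. Assumption \ref{consistentQ} makes this $o_p(1)$ uniformly in $\bm{x}$, and since $\beta$ ranges over the compact $\mathcal{B}$, taking expectations shows $\sup_{\theta,\beta}\lvert\mathbb{E}(\hat{m}-m)\rvert \xrightarrow{p} 0$.

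Finally, with (i)--(iii) in hand --- using compactness of $\Theta$ and $\mathcal{B}$ from Assumption \ref{general} to restrict attention to a compact parameter region (or, if $\theta$ is formally taken over all of $\mathbb{R}^p$, first arguing via coercivity of $M$ that no maximizing sequence escapes to infinity) --- I would apply the standard Wald/Newey--McFadden argument: a sequence of near-maximizers of a criterion that converges uniformly to a function with a well-separated unique maximizer converges in probability to that maximizer. This delivers $\lVert\hat{\theta}_n-\theta_0\rVert \xrightarrow{p} 0$ and $\lVert\hat{\beta}_n-\beta_0\rVert \xrightarrow{p} 0$. The delicate points to get right are the uniform-in-$\theta$ control of the plug-in error and the well-separatedness of the maximizer, which is precisely where identifiability (Theorem \ref{identify}) and Assumption \ref{identifyw} enter.
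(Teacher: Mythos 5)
Your proposal follows essentially the same route as the paper's proof: the same decomposition $\hat{M}_n - M = (\mathbb{E}_n-\mathbb{E})\hat{m} + \mathbb{E}(\hat{m}-m)$ with the first term handled by Assumption \ref{continuous_log}, the same bound $|\hat{m}-m|\le 2|\beta|\sup_a|\hat{Q}_\theta-Q_\theta|$ for the plug-in bias, the same Gibbs/Jensen argument that $(\theta_0,\beta_0)$ maximizes $M$ with uniqueness from Theorem \ref{identify}, and the same concluding argmax-consistency theorem (the paper cites Kosorok's Theorem 2.12, which is the Wald-type result you invoke). The argument is correct; your added remarks on uniformity in $\theta$ of the plug-in error and on the $\theta\in\mathbb{R}^p$ versus compact $\Theta$ discrepancy are slightly more careful than the paper but do not change the approach.
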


Denote the value function $V_{\theta_0}(d) = \mathbb{E} \big[ Q_{\theta_0}\{ \bm{X},A=d(\bm{X})\} \big]$ as the expectation of composite outcome when all the patients take treatment policy $d(\bm{x})$. The following result shows the consistency of the value function under the estimated optimal policy.

\begin{theorem}[\textbf{Value consistency with patient-specific utility}]\label{value}
    Let $\hat{\theta}_n$ be the maximum
pseudo-likelihood estimator for $\theta$ and let $\hat{d}_{\hat{\theta}_n}$ be the associated estimated optimal policy. Then, under the given assumptions, $|  V_{\theta_0}(\hat{d}_{\hat{\theta}_n}) - V_{\theta_0}(d_0^{opt})| \overset{p}{\to} 0$ as $n \to \infty$. Along with Assumption \ref{identifyQ}, $|\hat{d}_{\hat{\theta}_n}(\bm{x}) -d_0^{opt}(\bm{x}) | \overset{p}{\to} 0, \forall \bm{x} \in \mathcal{X}$.
\end{theorem}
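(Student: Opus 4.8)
The plan is to reduce both conclusions to a single uniform-convergence statement for the estimated composite outcome, after which the optimality-defining properties of the two maximizers and Assumption \ref{identifyQ} finish the argument. The first step is to show that
$$ \epsilon_n := \sup_{\bm{x} \in \mathcal{X},\, a \in \mathcal{A}} \bigl| \hat{Q}_{\hat{\theta}_n}(\bm{x},a) - Q_{\theta_0}(\bm{x},a) \bigr| \overset{p}{\to} 0. $$
Inserting the intermediate quantity $Q_{\hat{\theta}_n}(\bm{x},a)$ gives
$$ \hat{Q}_{\hat{\theta}_n}(\bm{x},a) - Q_{\theta_0}(\bm{x},a) = \bigl[\hat{Q}_{\hat{\theta}_n}(\bm{x},a) - Q_{\hat{\theta}_n}(\bm{x},a)\bigr] + \bigl[Q_{\hat{\theta}_n}(\bm{x},a) - Q_{\theta_0}(\bm{x},a)\bigr]. $$
Since $w(\bm{x};\hat{\theta}_n) \in [0,1]$, the first bracket is a convex combination of $\hat{Q}_Y - Q_Y$ and $\hat{Q}_Z - Q_Z$, hence uniformly $o_p(1)$ by Assumption \ref{consistentQ}. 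The second bracket equals $\{w(\bm{x};\hat{\theta}_n) - w(\bm{x};\theta_0)\}\{Q_Y(\bm{x},a) - Q_Z(\bm{x},a)\}$; a mean-value expansion together with the bounded derivative $w'$ and compactness of $\mathcal{X}$ from Assumption \ref{general} bounds the weight gap by $C\lVert \hat{\theta}_n - \theta_0 \rVert$ uniformly in $\bm{x}$, while $|Q_Y - Q_Z|$ is bounded by Assumption \ref{general}. Theorem \ref{consistency} then forces this term to $o_p(1)$ as well, proving $\epsilon_n \overset{p}{\to} 0$.

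For value consistency, fix $\bm{x}$ and exploit that $\hat{d}_{\hat{\theta}_n}$ maximizes $\hat{Q}_{\hat{\theta}_n}(\bm{x},\cdot)$ while $d_0^{opt}$ maximizes $Q_{\theta_0}(\bm{x},\cdot)$:
$$ Q_{\theta_0}(\bm{x}, \hat{d}_{\hat{\theta}_n}(\bm{x})) \geq \hat{Q}_{\hat{\theta}_n}(\bm{x}, \hat{d}_{\hat{\theta}_n}(\bm{x})) - \epsilon_n \geq \hat{Q}_{\hat{\theta}_n}(\bm{x}, d_0^{opt}(\bm{x})) - \epsilon_n \geq Q_{\theta_0}(\bm{x}, d_0^{opt}(\bm{x})) - 2\epsilon_n. $$
Combined with $Q_{\theta_0}(\bm{x}, d_0^{opt}(\bm{x})) \geq Q_{\theta_0}(\bm{x}, \hat{d}_{\hat{\theta}_n}(\bm{x}))$, this sandwiches $0 \le Q_{\theta_0}(\bm{x}, d_0^{opt}(\bm{x})) - Q_{\theta_0}(\bm{x}, \hat{d}_{\hat{\theta}_n}(\bm{x})) \le 2\epsilon_n$ for every $\bm{x}$, with the same realized $\epsilon_n$. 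Taking expectation over $\bm{X}$ and recalling $V_{\theta_0}(d) = \mathbb{E}[Q_{\theta_0}\{\bm{X}, A = d(\bm{X})\}]$ yields $0 \le V_{\theta_0}(d_0^{opt}) - V_{\theta_0}(\hat{d}_{\hat{\theta}_n}) \le 2\epsilon_n \overset{p}{\to} 0$, which is the first claim.

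For decision consistency, the pointwise squeeze above gives $Q_{\theta_0}(\bm{x}, \hat{d}_{\hat{\theta}_n}(\bm{x})) \overset{p}{\to} Q_{\theta_0}(\bm{x}, d_0^{opt}(\bm{x}))$, so $\liminf_n Q_{\theta_0}(\bm{x}, \hat{d}_{\hat{\theta}_n}(\bm{x})) \ge Q_{\theta_0}(\bm{x}, d_0^{opt}(\bm{x}))$. Applying Assumption \ref{identifyQ} with $\theta = \theta_0$ and the sequence $a_n = \hat{d}_{\hat{\theta}_n}(\bm{x})$ delivers $|\hat{d}_{\hat{\theta}_n}(\bm{x}) - d_0^{opt}(\bm{x})| \overset{p}{\to} 0$ for each $\bm{x} \in \mathcal{X}$, as required.

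I expect the main obstacle to lie in the uniform control of the weight term in the first step: because the plugged-in $\hat{\theta}_n$ is itself random, one must combine the Lipschitz-in-$\theta$ property of $w$ with the consistency of $\hat{\theta}_n$ to obtain convergence uniformly in $(\bm{x},a)$ rather than at a fixed $\theta$. A secondary subtlety is that the value-to-decision passage must be phrased so that Assumption \ref{identifyQ}, a well-separation condition, is legitimately invoked on the random sequence $\hat{d}_{\hat{\theta}_n}(\bm{x})$; the squeeze established in the value step is precisely what licenses this, since it converts uniform convergence of the objective into the hypothesis of Assumption \ref{identifyQ}.
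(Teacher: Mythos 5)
Your proposal is correct and follows essentially the same route as the paper's proof: both reduce the problem to the uniform bound $\sup_{a}|Q_{\theta_0}(\bm{x},a)-\hat{Q}_{\hat{\theta}_n}(\bm{x},a)|=o_p(1)$ via the same two-term decomposition (outcome-model error controlled by Assumption \ref{consistentQ}, and the weight gap $\{w(\bm{x};\hat{\theta}_n)-w(\bm{x};\theta_0)\}\{Q_Y-Q_Z\}$ controlled by the bounded derivative of $w$ and Theorem \ref{consistency}), and then pass to the value via the maximizer properties --- your explicit sandwich is just the elementary form of the paper's $|\sup_a f(a)-\sup_a g(a)|\leq \sup_a|f(a)-g(a)|$ step. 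If anything, you are slightly more complete, since you actually carry out the final passage to $|\hat{d}_{\hat{\theta}_n}(\bm{x})-d_0^{opt}(\bm{x})|\overset{p}{\to}0$ via Assumption \ref{identifyQ}, which the paper asserts but does not write out.
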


In addition to Assumption \ref{identifyw}, we further use the following assumption to make parameter inference applicable.

\begin{assumption}\label{inverse}
   $\mathbb{E}_{\bm{X}}  Cov_{A} 
[Q_{\theta_0}(\bm{X},A),
 \beta_0R(\bm{X},A)w'(\bm{X};\theta_0)
|\bm{X} ]$ is full rank. 
\end{assumption}

\begin{theorem}[\textbf{Asymptotic Normality}] \label{normal}

With the given regularity assumptions,
$\hat{\beta}_n$ and $\hat{\theta}_n$ is asymptotically linear and normal, i.e., $\sqrt{n} (\hat{\beta}_n - \beta_0,
 \hat{\theta}_n - \theta_0)$ can be expressed as:
$$  B^{-1}
\Bigg( 
    \begin{array}{c}
       n^{-\frac{1}{2}}  \sum_{i=1}^n \left\{ Q_{\theta_0}(\bm{X}_i,A_i) -\mathbb{E}\left[  Q_{\theta_0}(\bm{X}_i,A)   \mid \bm{X}_i \right] \right\} \\
n^{-\frac{1}{2}} \sum_{i=1}^n \beta_0\left\{ R_iw'_i(\theta_0) -\mathbb{E}\left[ R_i w'_i(\theta_0)  \mid \bm{X}_i \right] \right\}
    \end{array}
\Bigg) + o_p(1),
$$
where $B = \mathbb{E}_{\bm{X}}  Cov_{A} 
[Q_{\theta_0}(\bm{X},A),
 \beta_0R(\bm{X},A)w'(\bm{X};\theta_0)
|\bm{X} ]$, $R_i = R(\bm{X}_i,A_i) = Q_{Y}(\bm{X}_i,A_i) - Q_{Z}(\bm{X}_i,A_i)$ and $w'_i(\theta_0) = w'(\bm{X}_i;\theta_0)$.

The asymptotic distribution of $(\hat{\beta}_n ,
 \hat{\theta}_n )$ is
$\sqrt{n}( \hat{\beta}_n - \beta_0, \hat{\theta}_n - \theta_0)
\sim N(0, B^{-1}) $.
\end{theorem}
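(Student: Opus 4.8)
The plan is to treat $(\hat\beta_n,\hat\theta_n)$ as an M-estimator that solves the estimated score equations, and to run the classical Taylor-expansion (``sandwich'') argument for maximum-likelihood-type estimators; the only nonstandard ingredient is the presence of the plugged-in nuisance surfaces $\hat Q_Y,\hat Q_Z$. First I would compute the score of the per-observation log-likelihood $\hat m(\theta,\beta)$. Writing $m(\bm{X},A;\theta,\beta)=\beta Q_\theta(\bm{X},A)-\log\int_{\mathcal{A}}\exp\{\beta Q_\theta(\bm{X},t)\}\,d\nu(t)$ and using $\partial_\theta Q_\theta=R\,w'$, differentiation gives
$$\partial_\beta m = Q_\theta(\bm{X},A)-\mathbb{E}[Q_\theta(\bm{X},A)\mid\bm{X}],\qquad \partial_\theta m = \beta\big\{R(\bm{X},A)w'(\bm{X};\theta)-\mathbb{E}[R(\bm{X},A)w'(\bm{X};\theta)\mid\bm{X}]\big\},$$
where each conditional expectation is taken under the assignment density of Equation \ref{A}. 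Evaluated at $(\beta_0,\theta_0)$ this is precisely the per-observation summand in the claimed expansion, and, being a genuine score of a correctly specified conditional density, it is mean-zero given $\bm{X}$. Denote this score by $S(\bm{X},A;\beta,\theta)$ and let $\hat S$ be the same expression with $\hat Q_\theta,\hat R$ substituted.

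By Theorem \ref{consistency}, $(\hat\beta_n,\hat\theta_n)\xrightarrow{p}(\beta_0,\theta_0)$, so with $(\beta_0,\theta_0)$ interior to the compact parameter space of Assumption \ref{general} the maximizer satisfies the first-order condition $\mathbb{E}_n\hat S(\cdot;\hat\beta_n,\hat\theta_n)=0$ for large $n$. Expanding about $(\beta_0,\theta_0)$ gives
$$\sqrt{n}\begin{pmatrix}\hat\beta_n-\beta_0\\ \hat\theta_n-\theta_0\end{pmatrix} = -\big[\mathbb{E}_n\nabla\hat S(\bar\beta_n,\bar\theta_n)\big]^{-1}\,\sqrt{n}\,\mathbb{E}_n\hat S(\beta_0,\theta_0)$$
for an intermediate point $(\bar\beta_n,\bar\theta_n)$ on the segment to $(\hat\beta_n,\hat\theta_n)$. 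I would then establish three facts. (i) The plug-in is asymptotically negligible: since $S$ and $\nabla S$ are Lipschitz functionals of $Q_\theta$ and $R$ in the sup-norm on the compact sets of Assumption \ref{general} (the softmax weights $\exp\{\beta Q_\theta\}$ and their normalizers being smooth and bounded there), one has $|\hat S-S|\le C\,\big(\sup_a|\hat Q_\theta-Q_\theta|+\sup_a|\hat R-R|\big)$ pointwise; averaging and invoking the $o_p(n^{-1/2})$ rates of Assumption \ref{consistentQ} yields $\sqrt{n}\,\mathbb{E}_n[\hat S-S](\beta_0,\theta_0)=\sqrt{n}\,o_p(n^{-1/2})=o_p(1)$, so the numerator collapses to $\sqrt{n}\,\mathbb{E}_nS(\beta_0,\theta_0)$. (ii) By the Lindeberg--L\'evy central limit theorem $\sqrt{n}\,\mathbb{E}_nS(\beta_0,\theta_0)\xrightarrow{d}N(0,B)$, the finiteness of the variance following from boundedness of $Q_Y,Q_Z,w'$ together with $\lVert\mathbb{E}\bm{X}\rVert<\infty$, and $B=\mathbb{E}_{\bm X}\mathrm{Cov}_A[Q_{\theta_0},\beta_0 R\,w'(\theta_0)\mid\bm{X}]$ being exactly the conditional covariance matrix of $S$. (iii) Using uniform consistency of $\hat Q$ (Assumption \ref{consistentQ}), the uniform convergence in Assumption \ref{continuous_log}, and $(\bar\beta_n,\bar\theta_n)\xrightarrow{p}(\beta_0,\theta_0)$, a uniform law of large numbers gives $\mathbb{E}_n\nabla\hat S(\bar\beta_n,\bar\theta_n)\xrightarrow{p}\mathbb{E}[\nabla S(\beta_0,\theta_0)]$.

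The last step is the information-matrix identity. Because Equation \ref{A} is the true assignment law, $\int_{\mathcal{A}}f(t\mid\bm{X};\beta,\theta)\,d\nu(t)=1$; differentiating this identity twice in $(\beta,\theta)$ and interchanging derivative and integral (legitimate by boundedness of $Q_\theta$ and compactness of $\mathcal{A}$) yields $\mathbb{E}[\nabla S\mid\bm{X}]=-\mathrm{Var}(S\mid\bm{X})$, hence $\mathbb{E}[\nabla S]=-B$. Substituting into the sandwich turns $-(\mathbb{E}[\nabla S])^{-1}$ into $B^{-1}$, which produces the asymptotically linear representation in the statement; Slutsky's theorem with (ii) then gives $\sqrt{n}(\hat\beta_n-\beta_0,\hat\theta_n-\theta_0)\xrightarrow{d}N(0,B^{-1}BB^{-1})=N(0,B^{-1})$. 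Invertibility of $B$ is guaranteed by Assumption \ref{inverse}. I expect the main obstacle to be step (i): one must verify carefully that the score and its Jacobian are Lipschitz in the fitted surfaces under the sup-norm, so that the strong $o_p(n^{-1/2})$ rate of Assumption \ref{consistentQ} can be pushed through the nonlinear softmax normalizer and the conditional expectations. This uniform negligibility of the nuisance estimation is exactly what keeps the asymptotic variance at its oracle value $B^{-1}$ rather than inflating it.
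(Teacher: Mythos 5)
Your proposal is correct and follows essentially the same route as the paper: both arguments Taylor-expand the pseudo-likelihood around $(\beta_0,\theta_0)$, identify the linear term as the centered score $\bigl(Q_{\theta_0}-\mathbb{E}[Q_{\theta_0}\mid\bm{X}],\ \beta_0\{Rw'-\mathbb{E}[Rw'\mid\bm{X}]\}\bigr)$, show the curvature converges to the conditional covariance $B$ (your information-matrix identity is exactly what the paper obtains by directly differentiating the log-partition function $F$ twice), and use Assumption \ref{consistentQ} to make the plug-in of $\hat{Q}$ asymptotically negligible. The only difference is presentational — you expand the first-order condition for the score (sandwich form), while the paper expands the objective to second order and reads off the maximizer of the resulting quadratic — and your explicit treatment of the $o_p(n^{-1/2})$ nuisance-rate step is, if anything, more careful than the paper's.
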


\begin{remark}

The asymptotic variance expression shows that a random decision ($\beta_0=0$) would violate Assumption \ref{inverse} and render both estimation and inference inapplicable. This is due to the fact that a random decision would provide no information about patient preference. Conversely, a larger value of $\beta_0$ would reduce the asymptotic variance as it would include more relevant information in the observed treatment assignment, leading to better estimation and inference.

\end{remark}

\section{Simulation Experiments}

Let $\bm{X}=(X_1,X_2)$ be a vector of independent and identical random variables following normal distribution with mean 0 and standard deviation 0.5. Since we are interested in the algorithm performance when the treatment is continuous, we let the multiple outcomes be $Y = A(4X_1-2X_2+2)-2A^2 + \epsilon_Y$ and $Z = A(2X_1-4X_2-2)-2A^2 + \epsilon_Z$, where $\epsilon_Y$ and $\epsilon_Z$ are independent normal noises with mean 0 and standard deviation 0.5. Then the optimal treatment regime is $d_Y^{opt}(\bm{X}) = (2X_1-X_2+1)/2$ and $d_Z^{opt}(\bm{X}) = (X_1-2X_2-1)/2$.

For fixed utility scenario, we assume patient preference is constant across the population and let the composite outcome be a fixed function of multiple outcomes, i.e., $Q_{\theta}(\bm{X},A) = \omega Q_{Y}(\bm{X},A) + (1-\omega)Q_{Z}(\bm{X},A)$. For patient-specific utility scenario, we assume the composite outcome is $Q_{\theta}(\bm{X},A) = w(\bm{X};\theta)Q_{Y}(\bm{X},A) + \{1-w(\bm{X};\theta)\}Q_{Z}(\bm{X},A)$, where $w(\bm{X};\theta) = expit(\theta_0 + \theta_1X_1+\theta_2X_2)\in (0,1)$.

We estimate $Q_Y$ and $Q_Z$ with linear models, assuming the model is correctly specified. Multiple scenarios with difference setting of parameters are implemented and 500 Monte Carlo replications are performed per scenario for all the simulations.

\begin{table}[htb]
\centering
\begin{tabular}{cc|cc|cc|c}
\hline
\multicolumn{2}{c|}{Parameter} & \multicolumn{2}{c|}{Estimation}& \multicolumn{2}{c|}{Type I Error} &\multicolumn{1}{c}{Power} \\
n&  $(\beta_0,\omega_0)$ & $\hat{\beta}$& $\hat{\omega}$ & $\beta=\beta_0$&  $\omega=\omega_0$ & $\beta=0$\\
\hline
100  & (0.15, 0.3)& 0.159 (0.085) &0.322 (0.275) &0.07    &0.016  &0.498  \\
%     &            & Known  & 0.158 (0.085) &0.325 (0.276) &0.068   &0.014  &0.496  \\
     & (0.25, 0.3)& 0.259 (0.093) &0.286 (0.189) &0.044   &0.028  &0.880   \\
%     &            & Known  & 0.260 (0.092) &0.287 (0.187) &0.048   &0.028  &0.874  \\
\hline
250  & (0.15, 0.3)&0.152 (0.054) &0.303 (0.200) &0.062   &0.034  &0.882 \\
%     &            & Known &0.152 (0.054) &0.303 (0.200) &0.058   &0.036  &0.884 \\
     & (0.25, 0.3)&0.251 (0.061) &0.294 (0.130) &0.048   &0.046  &0.998    \\
%     &            & Known &0.251 (0.061) &0.294 (0.130) &0.052   &0.048  &0.998    \\
\hline
500  & (0.15, 0.3)&0.152 (0.037) &0.300 (0.143)  &0.056    &0.054    &0.996         \\
%     &            &Known  &0.152 (0.037) &0.300 (0.143)  &0.058    &0.044    &0.996\\
     & (0.25,0.3) &0.252 (0.042) &0.300 (0.089)   &0.056    & 0.062   &1.000\\
%     &            &Known  &0.252 (0.042) &0.300 (0.090)    &0.058    & 0.054   &1.000              \\
\hline
\end{tabular}
\caption{Estimation and inference results for simulations with fixed utility scenario. }
\label{total_cell}
\end{table}

Table 1 presents the performance of estimation and inference in fixed utility scenarios. The mean estimates of $\beta_0$ and $\omega_0$, along with their standard deviations across replications, type one error, and power against the alternative hypothesis $\beta_0 = 0$ are reported. The simulation results indicate that the estimates of the two parameters are consistent, with type I error rates close to the nominal level of 0.05 and acceptable power even in small sample sizes. As the sample size increases, the standard errors decrease, and the type I error becomes more centered around 0.05, as predicted by the central limit theorem. Additionally, as $\beta_0$ increases, suggesting an increase in optimal observed treatment assignment, the standard error of $\hat{\omega}$ decreases approximately at a rate of $1/\beta_0$, which is in line with the presented theorem and indicates a better estimation of multiple outcome weights.

\begin{table}[htb]
\centering
\begin{tabular}{cc|ccccc}
\hline
n& $(\beta_0,\omega_0)$  & Optimal & New & Y Optimizer &  Z Optimizer & Observed\\
\hline
100  & (0.15, 0.3)    &0.64 (0.08) &0.48 (0.24) &-0.33 (0.16) &0.47 (0.10) &-0.39 (0.13)\\
%     &             &0.64 (0.08) &0.49 (0.24) &-0.32 (0.16) &0.47 (0.10) &-0.39 (0.13)\\  
     & (0.25, 0.3)    &0.64 (0.08) &0.56 (0.14) &-0.33 (0.16) &0.47 (0.10) &-0.25 (0.13)\\
%     &             &0.64 (0.08) &0.57 (0.14) &-0.32 (0.16) &0.47 (0.10) &-0.25 (0.13)\\
\hline
250  & (0.15, 0.3)   &0.64 (0.05) &0.56 (0.12) &-0.33 (0.09) &0.47 (0.06) &-0.40 (0.09)\\
%     &            &0.64 (0.05) &0.56 (0.12) &-0.32 (0.09) &0.47 (0.06) &-0.40 (0.09) \\  
     & (0.25, 0.3)   &0.64 (0.05) &0.60 (0.07) &-0.33 (0.09) &0.47 (0.06) &-0.25 (0.08) \\
%     &            &0.64 (0.05) &0.61 (0.07) &-0.32 (0.09) &0.47 (0.06) &-0.25 (0.08)\\
\hline
500  & (0.15, 0.3) &0.64 (0.04) &0.60 (0.07) &-0.32 (0.06) &0.47 (0.04) &-0.39 (0.06)\\
%     &          &0.64 (0.04) &0.60 (0.07) &-0.32 (0.06) &0.47 (0.04) &-0.39 (0.06)   \\ 
     & (0.25, 0.3) &0.64 (0.04) &0.62 (0.05) &-0.32 (0.07) &0.47 (0.04) &-0.25 (0.06)\\
%     &          &0.64 (0.04) &0.62 (0.05) &-0.32 (0.07) &0.47 (0.04) &-0.25 (0.06)\\
\hline
\end{tabular}
\caption{Composite outcome value results for simulations with fixed utility scenario. }
\label{total_cell}
\end{table}

Table 2 displays the composite outcome values under different policies. It compares the true optimal policy, the newly proposed policy with estimated utility functions, policies estimated to maximize the two outcomes individually (represented by fixing $\omega$ = 1 and $\omega$ = 0), and the observed standard of care, with the standard deviation across replications shown in parentheses. The value of the standard care is the mean composite outcome under the generative model. Table 2 shows that as the sample size or $\beta_0$ increases, the composite outcome values under the proposed policy approach the optimal upper bound, with decreasing standard deviations. The new algorithm consistently outperforms the single outcome optimization strategy, even with small sample sizes. These results demonstrate the proposed algorithm's ability to effectively optimize the composite outcome in real-world scenarios. In the scenarios presented in Table 2, the composite outcome value under standard care increases as the sample size increases, but still falls behind the policy maximizing the Z outcome only. This result is due to a high weight assigned to the Z outcome ($1-\omega_0 = 0.7$). Despite this, the proposed policy continues to outperform both the standard care and individual outcome optimization strategies.

\begin{table}[htb]
\centering
\begin{tabular}{ccc|cccc}
\hline
$\bm{\theta}=(0,-2,2)$ &  Size & $\beta_0$ & $\hat{\beta}$& $\hat{\theta}_0$ & $\hat{\theta}_1$&  $\hat{\theta}_2$ \\
\hline
Estimates   & 500  & 0.40  & 0.40 (0.07) &0.02* (0.51) &-2.20* (1.06) & 2.13* (1.05)  \\
            &      & 0.60  & 0.61 (0.08) &0.02 (0.24) &-2.09 (0.60) &2.05 (0.63) \\      
            & 750  & 0.40  & 0.41 (0.06) &-0.007 (0.28) &-2.13 (0.82) &2.14 (0.76)  \\     
            &      & 0.60  & 0.61 (0.06) &-0.004 (0.19) &-2.05 (0.50) &2.07 (0.47)  \\      
            & 1000 & 0.40  & 0.40 (0.05) &0.01 (0.24) &-2.08 (0.61) &2.03 (0.60)  \\        
            &      & 0.60  & 0.61 (0.06) &0.01 (0.16) &-2.05 (0.42) &2.01 (0.40)   \\    
\hline
Type I Error   & 500  &  0.40 & 0.062 &0.024 &0.032 &0.032\\
               &      &  0.60 & 0.052 &0.040 &0.036 &0.046 \\  
               & 750  &  0.40 & 0.042 &0.020 &0.036 &0.042 \\                                 
               &      &  0.60 & 0.038 &0.030 &0.048 &0.046 \\                                 
               & 1000 &  0.40 & 0.082 &0.036 &0.040 &0.040  \\                                  
               &      &  0.60 & 0.086 &0.044 &0.044 &0.042\\                                \hline
\end{tabular}
\caption{Estimation and inference results for simulations with patient-specific utility scenario. }
\label{total_cell}
\end{table}

%\hline
%Power   & 500  &  0.40 & 1.000 &0.024 &0.798 &0.746 \\
%        &      &  0.60 & 1.000 &0.040 &0.970 &0.968   \\  
%        & 750  &  0.40 & 1.000 &0.020 &0.904 &0.928  \\    
%        &      &  0.60 & 1.000 &0.030 &0.998 &1.000      \\     
%        & 1000 &  0.40 & 1.000 &0.036 &0.962 &0.964 \\ 
%        &      &  0.60 & 1.000 &0.044 &0.998 &1.000     \\ 

In Table 3, we present the results of a simulation study that evaluates the performance of estimation and inference in patient-specific utility scenarios. The weight function $\omega(\bm{X};\theta_0)$ is set to be negatively associated with patient characteristic $X_1$ and positively associated with $X_2$. Table 3 reports the mean estimates of $\beta_0$ and $\theta_0$, along with their standard deviations across replications, type one error, and power against the alternative hypothesis that the parameters equal zero. The simulation results indicate that the parameter estimates are consistent, with type I error rates close to the nominal level of 0.05 and acceptable power. Similar to the fixed utility scenario, the model becomes more efficient as the sample size or $\beta_0$ increases. It is important to note that more samples are required in patient-specific utility scenarios to distinguish heterogeneous patient preferences compared to weight estimation in fixed utility scenarios. In the scenarios presented in Table 3, when the sample size is 500, the model may not converge due to the singularity of the estimated $B$ matrix, which violates Assumption \ref{identifyw} and \ref{inverse} for observed data. In three out of 500 replications, the parameter estimates are extreme, but the composite outcome values are still reasonable. These three replications are marked with an asterisk in Table 3 and removed, but kept in Table 4.

\begin{table}[htb]
\centering
\begin{tabular}{cc|ccccc}
\hline
n& $\beta_0$  & Optimal & New & Y Optimizer &  Z Optimizer & Observed\\
\hline
500  & 0.40& 1.72 (0.16) &1.47 (0.44) &-3.61 (0.31) &-3.59 (0.33) &-4.14 (0.37)\\
     & 0.60& 1.72 (0.16) &1.62 (0.19) &-3.61 (0.31) &-3.59 (0.33) &-3.28 (0.35)\\
750  & 0.40& 1.71 (0.13) &1.57 (0.18) &-3.62 (0.26) &-3.60 (0.27) &-4.13 (0.30) \\ 
     & 0.60& 1.71 (0.13) &1.64 (0.14) &-3.62 (0.26) &-3.60 (0.27) &-3.28 (0.28) \\
1000  & 0.40& 1.70 (0.11) &1.60 (0.15) &-3.62 (0.22) &-3.62 (0.22) &-4.14 (0.27)\\ 
     & 0.60& 1.70 (0.11) &1.65 (0.12) &-3.62 (0.22) &-3.62 (0.22) &-3.28 (0.25)\\
\hline
\end{tabular}
\caption{Composite outcome value results for simulations with patient-specific utility scenario. }
\label{total_cell}
\end{table}

Table 4 presents the composite outcome values for patient-specific utility scenarios under different policies. Similar to the fixed utility scenario, we observe that the composite outcome values under the proposed policy converge towards the optimal upper bound as the sample size or $\beta_0$ increases, while the standard deviations decrease. We also observe that the policy of maximizing a single outcome is becoming less favorable due to the heterogeneity of patient preferences. 

These results provide valuable insights into the evaluation of model performance under different utility scenarios, with a focus on parameter estimation and composite outcome values.

\section{Application to Optimal Radiation Dose Finding}

We applied the proposed methods to the radiation oncology data for optimal dose finding. Two outcomes are taken into consideration when optimizing patients' health: patient toxicity and lesion local progression. Patients are defined to have toxic event when 6-month post treatment ALBI score increase more than 0.5 unit, when compared with baseline ALBI score. When 6-month results are missing, 3-month ALBI score are used. Local progression is defined as the recurrence contiguous with the resection cavity or original tumor site. Because it is a lesion level outcome while we are interested in patient level preference, we used ``any local progression'' as the second outcome. A common tradeoff raises from the fact that when radiation dose increase, local progression rate will decrease while toxicity rate will increase. 

In real clinical application, a utility table can be used to balance two outcomes for optimal dose finding. The best result without either local progression or toxicity is set to be 100\% and the worse result with both local progression and toxicity is set to be 0\%. The utility score for the other two scenarios are usually pre-specified based on expert knowledge, clinician suggestion or patient-specific preference. In our dataset, the following utility table was regarded as the ground truth when assigning optimal treatment, which is also equivalent to the definition $Utility = 1 - \text{Toxicity} \times \omega - \text{Local Progression} \times (1-\omega)$, where the true value for weight parameter is $\omega_0 = 0.6$.

\begin{center}
\begin{tabular}{ |c|c|c|c| } 
\hline
 Utility & Local Control & Local Progression \\
\hline
No Toxicity& 100\% & 60\%  \\ 
Toxicity & 40\% & 0\% \\ 
\hline
\end{tabular}
\end{center}

Two dose response curves are fitted for two outcomes separately. Mean liver dose (MLD) and biological effective dose (BED) are used as the main predictors for patient level toxicity and lesion level local progression respectively. Each patient received five or six fractions in this trial and dose per fraction and total dose are used to calculate the MLD/BED value. ``Ratio'' is a patient specific parameter to bridge the total dose and liver dose because liver and lesion receive different radiation doses under the clinical treatment and one patient can have multiple lesions at different locations. Logistic regression and survival analysis with weibull model were conducted to predict the probability of patient toxicity and one-year local progression for any lesions.

\begin{itemize}
\item MLD = Total Dose $\times$ Ratio $\times$ (Dose per Fraction $\times$ Ratio + 2.5)/ (2 + 2.5).
\item BED = Total Dose $\times$ (Dose per Fraction/10 + 1).
\item Total Dose  = Number of Fraction $\times$ Dose per Fraction.
\end{itemize}

We analyzed the data using the proposed method for optimizing multiple outcomes. We estimated decision rules where the weight/preference is a fixed scalar and where the weight is patient specific. Possible factors include worse biomarker condition (like child pugh score, albi Score), larger tumor number/size and other medical problems may influence patient preferences and vote for less aggressive treatment (less efficacy less toxicity). We conducted statistical inference to explore whether these variables are significant indicator for different outcome preference/weight.

\subsection*{Fixed Utility Analysis}
When the outcome weight is assumed to be a constant, i.e., $\text{Utility} = 1 - \text{Toxicity} \times \omega - \text{Local Progression} \times (1-\omega)$, the estimated weight is $\hat{\omega} = 0.55$, with 95\% confidence interval [0.525, 0.586], which is pretty close to the ground truth value ($\omega_0 = 0.60$) recommended by clinical experts. The estimation for parameter of assignment optimality is $\hat{\beta} = 28.8$, with 95\% confidence interval [20.5, 37.0], which is significantly higher than 0 and indicates that the observed treatment is significantly better than random decision. The performance of newly proposed algorithm is compared with observed treatment and random decision is used as benchmark for utility score comparison. It can be seen from Table \ref{fig:table_fix} that the new algorithm achieved a remarkable improvement on the utility score, toxicity rate and local progression rate.

\begin{table}[h!]
  \begin{center}
    \caption{\label{fig:table_fix}Model comparison for fixed utility analysis: The rate of toxicity and local progression are compared for three decision rules. Utility score are calculated with ground truth $\omega_0 = 0.6$. The percentage improvement measures the gained information by our algorithm and is defined as $(V_{new} - V_{observed}) / (V_{observed} - V_{random})$, where V denote the outcome to be compared. }
    \label{tab:table1}
    \begin{tabular}{ccccc}
      \toprule % <-- Toprule here
      \textbf{Outcomes} & \textbf{New} & \textbf{Observed}
      & \textbf{Random} & \textbf{\% Improvement} \\
      \midrule % <-- Midrule here
      Toxicity & 15.2\% & 15.4\% &17.7\%  & 9.0\%\\
      Local Progression   & 17.3\% & 20.9\% & 35.6\% & 24.9\% \\
      Utility & 84.0\% & 82.4\% &75.1\%  & 21.9\% \\
      \bottomrule % <-- Bottomrule here
    \end{tabular}
  \end{center}
  
\end{table}

\subsection*{Patient Specific Utility Analysis}

We also conducted analysis where the outcome weight is assumed to be patient specific, i.e., $Utility = 1 - \text{Toxicity} \times w(X) - \text{Local Progression} \times [1-w(X)]$. Four variables that might influence outcome weighting were considered: lesion number, baseline ALBI score, maximal diameter of lesions and total lesion volume. For ease of interpretation, continuous variables were dichotomized with population mean as the threshold. Table \ref{fig:table_pt_spe} shows that patients with larger tumor, in the sense of diameter (p = 0.07) or volume (p=0.10), is associated with a lower weight for toxicity and a higher weight for local progression outcome. The result is significant with a significance level $\alpha = 0.1$ for our small sample analysis (80 patient and 121 lesions) and is consistent with the common sense: more weight should be given to tumor control/intervention efficacy for patients with worse health condition, rather than toxicity or other chronic diseases.

\begin{table}[h!]
  \begin{center}
    \caption{\label{fig:table_pt_spe}Patient specific utility analysis: outcome weight, 95\% confidence interval and p value are calculated for each covariate and subgroup to study whether the outcome weight depends on patient characteristics.}
    \label{tab:table1}
    \begin{tabular}{ccccc}
      \toprule % <-- Toprule here
      \textbf{Covariate $X$} & \textbf{Group} & \textbf{$\hat{w}(x)$}
      & \textbf{95\% CI} & \textbf{p value} \\
      \midrule % <-- Midrule here
      \multirow{ 2}{*}{No. Lesion} & $= 1$    & 0.52  & [ 0.46, 0.58 ]  &
      \multirow{ 2}{*}{0.50}  \\
       & $\geq 2$ & 0.57  & [ 0.52, 0.62 ]  &  \\
      \hline
      \multirow{ 2}{*}{ALBI Score} & $< -2$    & 0.55  & [ 0.50, 0.59 ]  &
      \multirow{ 2}{*}{0.90}  \\
       & $\geq -2$ & 0.56  & [ 0.49, 0.63 ]  &  \\
      \hline
      \multirow{ 2}{*}{Lesion Size}  & $<5cm$ & 0.65  & [ 0.58, 0.71 ]  & \multirow{ 2}{*}{0.07}   \\
       & $\geq 5cm$ & 0.51  & [ 0.47, 0.54 ]  &  \\
       \hline
      \multirow{ 2}{*}{Lesion Volume}  & $<75cm^3$ & 0.64  & [ 0.57, 0.70 ]  & \multirow{ 2}{*}{0.10}   \\
       & $\geq 75cm^3$ & 0.51  & [ 0.47, 0.55 ]  &  \\
      \bottomrule % <-- Bottomrule here
    \end{tabular}
  \end{center}
\end{table}

\begin{figure}[H]
\caption{Different outcome preference among two groups: patients with larger tumor will assign more weight on local progression. Curves is visualizing the fact that as radiation dose, toxicity rate increase and local progression rate decrease, and black dots are the observed dosage assignment. Dash lines are the utility contour line, whose slopes are decided by the outcome weight.}%
    \label{fig:example}%
    \centering{{\includegraphics[width=7.5cm]{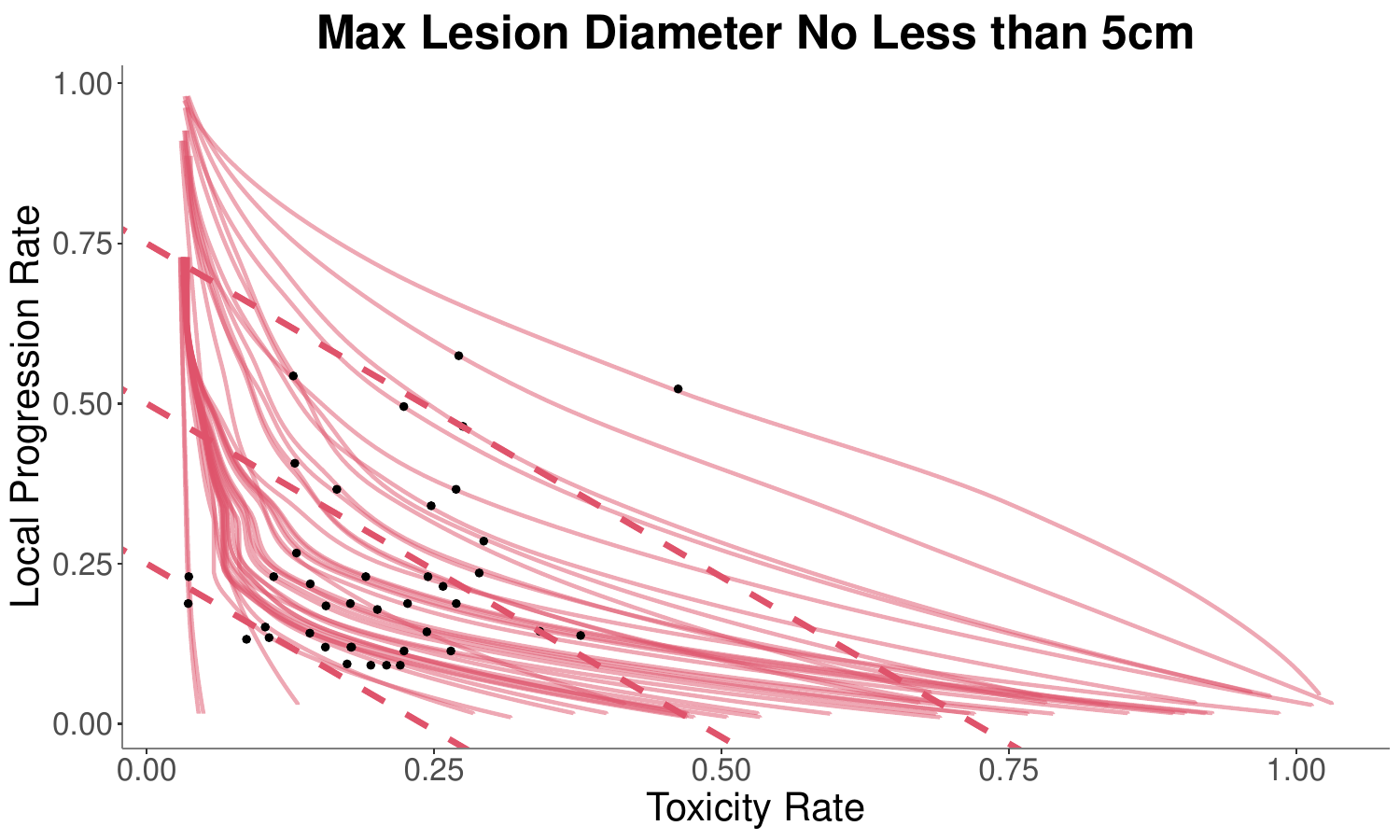} }}%
    \qquad{{\includegraphics[width=7.5cm]{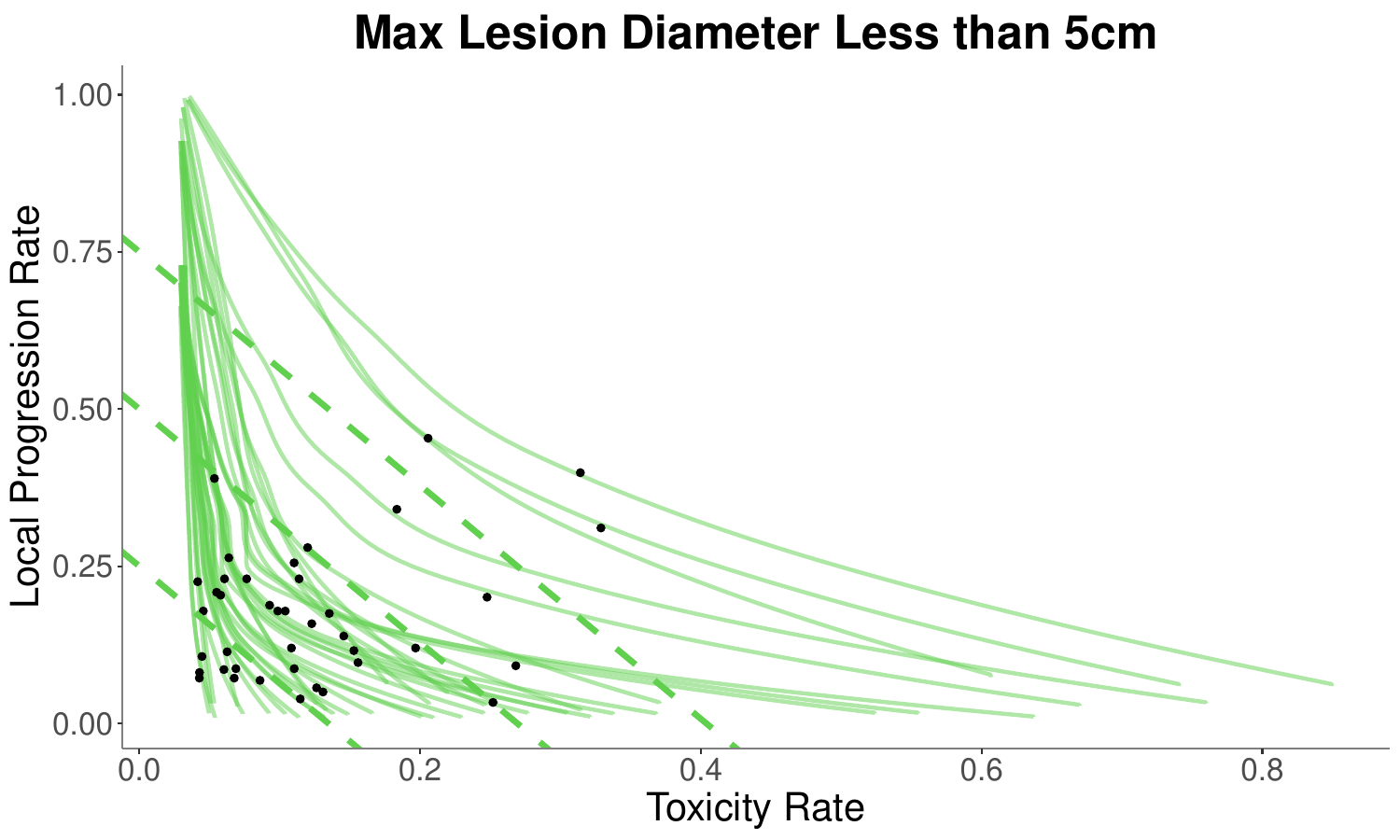} }}%
    
\end{figure}

\section{Discussion}

The use of individualized treatment rules with fixed scalar outcomes has been extensively studied, but in real-world clinical practice, clinicians must consider multiple secondary outcomes such as toxicity, mortality, and chronic pain, in addition to the primary outcome. This creates a gap between available statistical methods and clinical practice. To address this gap, \cite{Laber_category} proposed a method for handling scenarios with multiple outcomes and binary/categorical treatments that considers treatment assignment as a function of patient characteristics, i.e., $Pr\{ A = d^{opt}_U(\bm{x})|\bm{X} = \bm{x}\} = expit(\bm{x}^T\beta)$. However, this method fails to utilize information on counterfactual outcomes. 

In response to this limitation, we propose a new algorithm for incorporating scenarios with multiple outcomes and continuous treatment, assuming treatment assignment is based on counterfactual outcomes, i.e., $P(A=a|\bm{X} = \bm{x}) \propto exp\{ \beta Q_{\theta}(\bm{x},a) \}$. Our approach reduces the need for model specification and implicitly assumes that patient characteristics are independent of treatment assignment, conditional on the same composite outcome. The intuition behind this assumption is that if two doses result in the same composite outcome, they should have an equal chance of being selected after negotiation between clinicians and patients. This property of the treatment assignment mechanism should be fully utilized.

Another fundamental assumption is that clinicians and patients are making sub-optimal decisions, at least ones that are better than random choice. This assumption is necessary, as otherwise, no useful information can be extracted from random assignment. To distinguish sub-optimal decisions from random decisions, we make inference about the parameter $\beta$. A significantly positive $\beta$ can reject the random decision hypothesis and allow for further analysis of patient preference.

The main objective of this paper is to bridge the gap between available ITR methods and multiple outcomes applications. Therefore, we assume that the conditional outcome mean model is always known or correctly specified.  A possible future research direction could be to investigate the combination with missing data issues and competing risks in survival analysis.

%Model misspecification is discussed briefly in the appendix. Additionally, we demonstrate the scenario with more than two outcomes in the appendix.

\section{Appendix}
\renewcommand{\proofname}{\textbf{Proof}}

\begin{proof} [\textbf{Proof of Theorem \ref{identify}} ]

Recall the conditional distribution of treatment assignment is
$$f(A=a \mid \bm{X} = \bm{x};\beta,\theta) = 
\frac{exp\{ \beta  Q_{\theta}(\bm{x},a) \}}{ \int_{\mathcal{A}} exp\{\beta   Q_{\theta}(\bm{x},t) \} d\nu(t) }$$
Suppose $f(A \mid \bm{X};\beta,\theta) = f(A \mid \bm{X};\beta_0,\theta_0)$, $\forall (\bm{X},A)\  a.s.$, we have that
\begin{align*}
    &f(A \mid \bm{X};\beta,\theta) = f(A \mid \bm{X};\beta_0,\theta_0), \ \forall (\bm{X},A)\  a.s.\\
    \Longleftrightarrow \ & 
    \beta Q_{\theta}(\bm{X},a) - \beta Q_{\theta}(\bm{X},b) = 
    \beta_0 Q_{\theta_0}(\bm{X},a) - \beta_0 Q_{\theta_0}(\bm{X},b),\ \forall a,b \in \mathcal{A}, \ \forall \bm{X}\  a.s.\\
    \Longleftrightarrow \ & 
    \big[Q_Y(\bm{X},A) - Q_Z(\bm{X},A) \big]   \big[ \beta_0 w(\bm{X};\theta_0) - \beta w(\bm{X};\theta) \big] 
    + Q_Z(\bm{X},A)  (\beta_0 - \beta) = c(\bm{X})\\
    \tag*{\llap{$\forall (\bm{X},A)\  a.s.$}}
\end{align*}

For $\forall \bm{X} \in \mathcal{X}_S$, if there exists $(\lambda_1,\lambda_2)\not = (0,0)$ such that $\lambda_1 \big[Q_Y(\bm{X},A) - Q_Z(\bm{X},A) \big] + \lambda_2 Q_Z(\bm{X},A)$ equals to a constant irrelevant to $A$, then $Cov_{A} [Q_{Y}(\bm{X},A), Q_{Z}(\bm{X},A)|\bm{X} ]$ is not full rank, which contradicts Assumption \ref{identifyw}. Thus such $(\lambda_1,\lambda_2)$ must be $(0,0)$, which corresponds to $\beta = \beta_0 \not = 0$ (Assumption \ref{general}) and $w(\bm{X};\theta) = w(\bm{X};\theta_0)$, $\forall \bm{X} \in \mathcal{X}_S$.

Additionally, according to Assumption \ref{identifyw}, $\mathbb{E} \left[  \bm{X}\bm{X}^T  | \bm{X} \in \mathcal{X}_S\right]$ is a full rank matrix, and we can straightforwardly show that ``$w(\bm{X};\theta) = w(\bm{X};\theta_0)$, $\forall \bm{X} \in \mathcal{X}_S$'' is equivalent to ``$\theta = \theta_0$'', when the patient preference model is $w(\bm{X};\theta) = 1/[1+exp(\bm{X}^T\theta)]$.
\begin{align*}
    &w(\bm{X};\theta) = w(\bm{X};\theta_0), \ \forall \bm{X} \in \mathcal{X}_S\\
    \Leftrightarrow \ & \bm{X}^T\theta = \bm{X}^T\theta_0, \ \forall \bm{X} \in \mathcal{X}_S\\
    \Leftrightarrow \ & \mathbb{E} \left[ (\theta - \theta_0)^T \bm{X}\bm{X}^T (\theta - \theta_0) | \bm{X} \in \mathcal{X}_S\right]  = 0\\
    \Leftrightarrow \ & (\theta - \theta_0)^T \mathbb{E} \left[  \bm{X}\bm{X}^T  | \bm{X} \in \mathcal{X}_S\right] (\theta - \theta_0)  = 0\\
    \Leftrightarrow \ & \theta = \theta_0.
\end{align*}

Above all, $f(A \mid \bm{X};\beta,\theta) = f(A \mid \bm{X};\beta_0,\theta_0), \ \forall (\bm{X},A)\  a.s.$, implies $(\beta,\theta)=(\beta_0,\theta_0)$.

\end{proof}

\begin{lemma}[Kosorok 2008 Theorem 2.12] \label{Kosorok2.12}
Assume that (i) $\liminf_{n \rightarrow \infty} M(\theta_n) \geq M(\theta_0)$ implies $d(\theta_n,\theta_0)\xrightarrow{p} 0$ for some $\theta_0 \in \Theta$ and any sequence $\{ \theta_n\} \in \Theta$ (identifiability condition), where $d$ is a distance measure. Then for a sequence of estimator $\hat{\theta}_n \in \Theta$, if  (ii) $M_n(\hat{\theta}_n) = \sup_{\theta \in \Theta} M_n(\theta) - o_p(1)$ and (iii) $\sup_{\theta \in \Theta} \mid M_n(\theta) - M(\theta) \mid \xrightarrow{p} 0$ then $d(\hat{\theta}_n,\theta_0) \xrightarrow{p} 0$.
\end{lemma}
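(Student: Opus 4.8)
The plan is to follow the classical argmax-consistency argument (van der Vaart's Theorem~5.7 / the Kosorok statement itself): use the near-maximizing property (ii) together with the uniform convergence (iii) to push the \emph{population} criterion $M(\hat{\theta}_n)$ up to at least the level $M(\theta_0)$, and then invoke the identifiability/well-separatedness hypothesis (i) to conclude $d(\hat{\theta}_n,\theta_0)\xrightarrow{p}0$.

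First I would establish a probabilistic lower bound on the population criterion evaluated at the estimator. By the triangle inequality and (iii),
\[
M(\hat{\theta}_n) \;\ge\; M_n(\hat{\theta}_n) - \sup_{\theta\in\Theta}\bigl| M_n(\theta)-M(\theta)\bigr| \;\ge\; M_n(\hat{\theta}_n) - o_p(1).
\]
By the near-maximizing property (ii), $M_n(\hat{\theta}_n)\ge \sup_{\theta}M_n(\theta) - o_p(1) \ge M_n(\theta_0) - o_p(1)$, and applying (iii) once more at the fixed point $\theta_0$ gives $M_n(\theta_0) = M(\theta_0) + o_p(1)$. Chaining these bounds and collecting the $o_p(1)$ terms yields $M(\hat{\theta}_n) \ge M(\theta_0) - o_p(1)$.

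The main obstacle is that hypothesis (i) is phrased for an arbitrary \emph{deterministic} sequence $\{\theta_n\}$, whereas $\hat{\theta}_n$ is random, so the implication cannot be applied directly. I would resolve this with the standard subsequence device: convergence in probability is equivalent to the statement that every subsequence admits a further subsequence along which convergence holds almost surely. Passing to such an almost-sure subsequence of the $o_p(1)$ remainders, the bound $M(\hat{\theta}_{n_k}) \ge M(\theta_0) - o(1)$ holds pathwise, so $\liminf_k M(\hat{\theta}_{n_k}) \ge M(\theta_0)$; condition (i), now applied to the (pathwise fixed) sequence $\{\hat{\theta}_{n_k}\}$, forces $d(\hat{\theta}_{n_k},\theta_0)\to 0$ along that subsequence.

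Finally, since every subsequence of the nonnegative sequence $\{d(\hat{\theta}_n,\theta_0)\}$ has a further subsequence converging to $0$, the full sequence converges to $0$ in probability, i.e. $d(\hat{\theta}_n,\theta_0)\xrightarrow{p}0$, as claimed. The only delicate point is the subsequence bridge between the random estimator and the sequence-form identifiability hypothesis; the remaining steps are routine manipulations of the $o_p(1)$ terms, and compactness of $\Theta$ is used only implicitly to guarantee that $\hat{\theta}_n$ and the relevant suprema are well defined.
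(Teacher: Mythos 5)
Your proof is correct, but there is nothing in the paper to compare it against: the authors state this result as a citation of Kosorok (2008), Theorem 2.12, and supply no proof of their own. Your chaining step $M(\hat{\theta}_n) \ge M_n(\hat{\theta}_n) - o_p(1) \ge M_n(\theta_0) - o_p(1) \ge M(\theta_0) - o_p(1)$ is exactly the standard argument, and your subsequence device is a legitimate way to bridge the random estimator and the deterministic-sequence form of hypothesis (i). For reference, the textbook proof usually avoids the subsequence detour by first converting (i) into the equivalent ``well-separated maximum'' statement: for every $\epsilon>0$ one has $\sup\{M(\theta): d(\theta,\theta_0)\ge\epsilon\} < M(\theta_0)$ (if not, a sequence violating (i) is easily constructed), after which $P\{d(\hat{\theta}_n,\theta_0)\ge\epsilon\} \le P\{M(\hat{\theta}_n)\le M(\theta_0)-\delta_\epsilon\}\to 0$ follows directly from the lower bound on $M(\hat{\theta}_n)$; this buys a quantitative gap $\delta_\epsilon$ and sidesteps any almost-sure passage. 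Two minor caveats you gloss over, both standard: Kosorok's version is stated in outer probability precisely because $\sup_\theta M_n(\theta)$ and $\hat{\theta}_n$ need not be measurable, and the hypothesis (i) as printed writes $d(\theta_n,\theta_0)\xrightarrow{p}0$ for a deterministic sequence, which you correctly read as ordinary convergence. Neither affects the validity of your argument.
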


\iffalse
\begin{lemma}[Stochastically equicontinuous]
 \ 
\\
Let $\mathbb{G} = \sqrt{n}(\mathbb{E}_n - \mathbb{E})$, $\mathbb{G} [\hat{Q}_{\theta,n}(\bm{x},a) - Q_{\theta}(\bm{x},a) ] = o_p(1)$.
\end{lemma}
\fi

\begin{lemma}

$(\theta_0, \beta_0) = \argmax M(\theta,\beta)$.
\end{lemma}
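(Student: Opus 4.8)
The plan is to recognize $M(\theta,\beta)$ as the population expected log pseudo-likelihood and then reduce the claim to the nonnegativity of the Kullback--Leibler divergence, combined with the identifiability result of Theorem \ref{identify}. Writing $m(\bm{x},a;\theta,\beta) = \beta Q_{\theta}(\bm{x},a) - \log\int_{\mathcal{A}} \exp\{\beta Q_{\theta}(\bm{x},t)\}\,d\nu(t)$ for the per-observation log-likelihood (the population analogue of $\hat{m}$ in Assumption \ref{continuous_log}), the objective is $M(\theta,\beta) = \mathbb{E}[m(\bm{X},A;\theta,\beta)]$, where the expectation is taken under the true data-generating law in which $A \mid \bm{X}$ follows the density $f(\cdot \mid \bm{X};\beta_0,\theta_0)$ of Equation \ref{A}. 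The crucial observation is simply that $m(\bm{x},a;\theta,\beta) = \log f(a \mid \bm{x};\beta,\theta)$, so $M$ is exactly an expected conditional log-likelihood.

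First I would form the difference and condition on $\bm{X}$:
$$M(\theta_0,\beta_0) - M(\theta,\beta) = \mathbb{E}_{\bm{X}}\left[\int_{\mathcal{A}} \log\frac{f(a\mid\bm{X};\beta_0,\theta_0)}{f(a\mid\bm{X};\beta,\theta)}\, f(a\mid\bm{X};\beta_0,\theta_0)\,d\nu(a)\right] = \mathbb{E}_{\bm{X}}\big[ D_{\mathrm{KL}}\big(f(\cdot\mid\bm{X};\beta_0,\theta_0)\,\|\,f(\cdot\mid\bm{X};\beta,\theta)\big)\big].$$
By Gibbs' inequality (Jensen applied to the convex function $-\log$), the inner KL divergence is nonnegative for every $\bm{x}$, whence $M(\theta_0,\beta_0) \geq M(\theta,\beta)$ for all $(\theta,\beta)$, i.e.\ $(\theta_0,\beta_0)$ maximizes $M$. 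Finiteness of every integral is guaranteed by the boundedness conditions of Assumption \ref{general} (compact $\mathcal{X},\mathcal{A}$ and bounded $Q_Y,Q_Z$), which keep $m$ uniformly bounded and justify the interchange of expectation and integration.

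To upgrade this to $(\theta_0,\beta_0)$ being the unique argmax, I would invoke the equality case of Gibbs' inequality: the conditional KL divergence vanishes if and only if $f(\cdot\mid\bm{x};\beta,\theta) = f(\cdot\mid\bm{x};\beta_0,\theta_0)$ $\nu$-almost everywhere. Since the outer integrand is nonnegative, $M(\theta,\beta) = M(\theta_0,\beta_0)$ forces the densities to agree for $\mathbb{P}_{\bm{X}}$-almost every $\bm{x}$, and Theorem \ref{identify} then delivers $(\theta,\beta) = (\theta_0,\beta_0)$. The main obstacle I anticipate is purely measure-theoretic bookkeeping: passing from ``$\mathbb{E}_{\bm{X}}$ of a nonnegative integrand vanishes'' to ``the integrand vanishes $\mathbb{P}_{\bm{X}}$-a.s.\ on a set of positive probability,'' and then checking that this a.s.\ equality of densities actually triggers the hypothesis of Theorem \ref{identify}. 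Both points are routine once we note that Assumption \ref{identifyw} ensures $\Pr(\bm{X}\in\mathcal{X}_S) > 0$, so that the identifiability argument applies on a nonnegligible portion of the covariate space and the conclusion $(\theta,\beta)=(\theta_0,\beta_0)$ genuinely follows.
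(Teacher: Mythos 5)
Your proposal is correct and follows essentially the same route as the paper: the paper's proof is exactly the Jensen/Gibbs inequality applied to the expected log-likelihood ratio $\mathbb{E}_{(\beta_0,\theta_0)}[\ln\{f(A\mid\bm{X};\beta,\theta)/f(A\mid\bm{X};\beta_0,\theta_0)\}]\le 0$, which is your KL-divergence argument in different notation. Your additional discussion of uniqueness via Theorem \ref{identify} goes beyond what the paper's one-line proof establishes, but it is a sound and compatible supplement rather than a different approach.
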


\begin{proof}
\begin{align*}
  &\mathbb{E}_{(\beta_0,\theta_0)} \Bigg[ ln
  \frac{f(A|\bm{X};\beta,\theta)}{f(A|\bm{X};\beta_0,\theta_0)}\Bigg]
  \leq ln \Bigg[\mathbb{E}_{(\beta_0,\theta_0)} \frac{f(A|\bm{X};\beta,\theta)}{f(A|\bm{X};\beta_0,\theta_0)}\Bigg] = 0\\
  \Longrightarrow & M(\theta,\beta) \leq M(\theta_0,\beta_0)
\end{align*}
\end{proof}

\begin{lemma}[Consistent treatment distribution estimation]\label{consistentA}
Assume $(\beta,\theta)$ is known, $\hat{Q}_{\theta}(\bm{x},a)$ and $\Tilde{f}_{A\mid \bm{X}}(a|\bm{x})$ are the estimated outcome and conditional treatment distribution, $Q_{\theta}(\bm{x},a)$ and $f_{A\mid \bm{X}}(a|\bm{x})$ are the ground truth. If $\sup_{a \in \mathcal{A}} \mid Q_{\theta}(\bm{x},a) - \hat{Q}_{\theta}(\bm{x},a)\mid = o_p(1)$, then $\forall$ $g(A,X)$ with $\mathbb{E}_{A|X,Q} \ |g(A,X)| < M < \infty$, we have 
$\mathbb{\hat{E}}_{A|X,\hat{Q}}\  g(A,X) \overset{p}{\to} \mathbb{E}_{A|X,Q} \ g(A,X)$.
\end{lemma}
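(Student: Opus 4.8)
The plan is to show that the estimated expectation $\hat{\mathbb{E}}_{A\mid X,\hat Q}\,g(A,X)$ converges in probability to the true conditional expectation $\mathbb{E}_{A\mid X,Q}\,g(A,X)$ by comparing the two explicit density formulas and exploiting the uniform consistency of $\hat{Q}_\theta$. Writing the conditional densities out, we have $f_{A\mid\bm{X}}(a\mid\bm{x}) = \exp\{\beta Q_\theta(\bm{x},a)\}/\int_{\mathcal{A}}\exp\{\beta Q_\theta(\bm{x},t)\}\,d\nu(t)$ and an analogous expression for $\tilde f_{A\mid\bm{X}}$ with $\hat Q_\theta$ in place of $Q_\theta$. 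The expectations of interest are $\mathbb{E}_{A\mid X,Q}\,g(A,X) = \int_{\mathcal{A}} g(a,\bm{x}) f_{A\mid\bm{X}}(a\mid\bm{x})\,d\nu(a)$ and the hatted version with $\tilde f$. Since $(\beta,\theta)$ is held fixed and known, the only source of error is the discrepancy $\sup_{a\in\mathcal{A}}|Q_\theta(\bm{x},a)-\hat Q_\theta(\bm{x},a)| = o_p(1)$.

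First I would control the normalizing constants. Let $N(\bm{x}) = \int_{\mathcal{A}}\exp\{\beta Q_\theta(\bm{x},t)\}\,d\nu(t)$ and $\hat N(\bm{x}) = \int_{\mathcal{A}}\exp\{\beta \hat Q_\theta(\bm{x},t)\}\,d\nu(t)$. Because $x\mapsto e^{\beta x}$ is Lipschitz on any bounded interval (recall from Assumption \ref{general} that $Q_Y, Q_Z$, hence $Q_\theta$, are bounded, and $\hat Q_\theta$ is uniformly close to $Q_\theta$), the uniform bound $\sup_a|Q_\theta-\hat Q_\theta| = o_p(1)$ transfers to $\sup_a|\exp\{\beta Q_\theta(\bm{x},a)\} - \exp\{\beta\hat Q_\theta(\bm{x},a)\}| = o_p(1)$. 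Integrating over the compact set $\mathcal{A}$ (which has finite $\nu$-measure) gives $|N(\bm{x})-\hat N(\bm{x})| = o_p(1)$, and since $N(\bm{x})$ is bounded away from zero (again by boundedness of $Q_\theta$ on a compact domain), $1/\hat N(\bm{x}) \overset{p}{\to} 1/N(\bm{x})$ uniformly as well.

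Next I would bound the difference of the two expectations directly. Write
\begin{align*}
\hat{\mathbb{E}}_{A\mid X,\hat Q}\,g - \mathbb{E}_{A\mid X,Q}\,g
= \int_{\mathcal{A}} g(a,\bm{x})\left[\frac{\exp\{\beta\hat Q_\theta(\bm{x},a)\}}{\hat N(\bm{x})} - \frac{\exp\{\beta Q_\theta(\bm{x},a)\}}{N(\bm{x})}\right] d\nu(a).
\end{align*}
I would split the bracketed term by adding and subtracting $\exp\{\beta\hat Q_\theta(\bm{x},a)\}/N(\bm{x})$, yielding one piece governed by the numerator discrepancy (times $1/N$) and one piece governed by the normalizer discrepancy $|1/\hat N - 1/N|$ (times the bounded integrand $\exp\{\beta\hat Q_\theta\}$). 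Using $\mathbb{E}_{A\mid X,Q}|g| < M < \infty$ together with the uniform $o_p(1)$ controls from the previous step, each piece is $o_p(1)$, and the triangle inequality closes the argument.

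The main obstacle is purely technical rather than conceptual: I must ensure that the Lipschitz/exponential bounds hold uniformly in $a$ and that the integrability of $g$ against the true density is enough to dominate the error integral, even though the bound involves $g$ weighted against the approximate density. The cleanest route is to pass everything through the ratio $\tilde f / f$, which is uniformly bounded above and below by the boundedness of $Q_\theta, \hat Q_\theta$ and the positivity of the normalizers, so that $\mathbb{E}_{A\mid X,Q}|g| < \infty$ immediately yields $\mathbb{E}_{A\mid X,\hat Q}|g| < \infty$ and controls the error term. This uniform ratio bound, which relies essentially on Assumption \ref{general} (compactness and boundedness) together with the $o_p(1)$ rate of Lemma's hypothesis, is the key step that makes the interchange of limit and integral legitimate.
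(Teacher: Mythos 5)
Your proposal is correct, and its closing observation --- that the cleanest route is to bound the ratio $\tilde f_{A\mid\bm{X}}/f_{A\mid\bm{X}}$ uniformly above and below --- is in fact the \emph{entirety} of the paper's proof. The paper writes $e^{\beta\hat Q_\theta}=e^{\beta Q_\theta}\,e^{\beta(\hat Q_\theta-Q_\theta)}$ with the second factor trapped in $[1-\eta,1+\eta]$ for $\eta\to 0$, so that both the numerator and the normalizer are perturbed multiplicatively and one gets $|\tilde f - f|\le 4\eta f$ directly; integrating $|g|$ against this bound gives $4\eta M$ and finishes. Your primary worked-out route --- the additive decomposition via $e^{\beta\hat Q_\theta}/N$ together with separate control of $|N-\hat N|$ --- also closes, but it leans on more structure: you need the Lipschitz constant of $e^{\beta x}$ on a bounded range, $\nu(\mathcal{A})<\infty$, and $f$ bounded away from zero in order to convert $\mathbb{E}_{A\mid X,Q}|g|<M$ into $\int|g|\,d\nu<\infty$, all of which come from the compactness and boundedness in Assumption \ref{general}. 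The multiplicative/ratio argument needs none of that beyond $\sup_a|\hat Q_\theta-Q_\theta|=o_p(1)$ and the stated integrability of $g$ against the true density, which is why the paper's two-line bound is the sharper way to package the same idea. You correctly identified the one genuine technical obstacle (dominating the error integral when the bound involves $g$ against the wrong density) and the right way around it, so there is no gap.
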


\begin{proof}
If $\sup_{a \in \mathcal{A}} \mid Q_{\theta}(\bm{x},a) - \hat{Q}_{\theta}(\bm{x},a)\mid = \epsilon \to 0$, then 
$exp\{ \beta  [\hat{Q}_{\theta}(\bm{x},a) - Q_{\theta}(\bm{x},a)] \} \in [ exp(-\beta \epsilon), exp(\beta \epsilon)] \subset [1-\eta,1+\eta]$ $(\eta \to 0)$.
\begin{align*}
  &|\Tilde{f}_{A\mid \bm{X}}(A=a \mid \bm{X} = \bm{x}) -  f_{A\mid \bm{X}}(A=a \mid \bm{X} = \bm{x})| \\
  =& \mid
  \frac{exp\{ \beta  \hat{Q}_{\theta}(\bm{x},a) \}}{ \int_{\mathcal{A}} exp\{\beta   \hat{Q}_{\theta}(\bm{x},t) \} d\nu(t) } - \frac{exp\{ \beta  Q_{\theta}(\bm{x},a) \}}{ \int_{\mathcal{A}} exp\{\beta   Q_{\theta}(\bm{x},t) \} d\nu(t) }\mid\\
  =& \mid\frac{exp\{ \beta  Q_{\theta}(\bm{x},a) \} \  exp\{ \beta  [\hat{Q}_{\theta}(\bm{x},a) - Q_{\theta}(\bm{x},a)] \} }{ \int_{\mathcal{A}} exp\{\beta   Q_{\theta}(\bm{x},t) \} \ exp\{ \beta  [\hat{Q}_{\theta}(\bm{x},t) - Q_{\theta}(\bm{x},t)] \}d\nu(t) } - 
  \frac{exp\{ \beta  Q_{\theta}(\bm{x},a) \}}{ \int_{\mathcal{A}} exp\{\beta   Q_{\theta}(\bm{x},t) \} d\nu(t) }\mid\\
  \leq &  \ f_{A\mid \bm{X}}(A=a \mid \bm{X} = \bm{x})  \times max\{ \frac{2\eta}{1-\eta}, \frac{2\eta}{1+\eta} \}\\
  \leq & \ 4 \eta f_{A\mid \bm{X}}(A=a \mid \bm{X} = \bm{x})
\end{align*}

Thus if $\mathbb{E}_{A|X,Q} |g(A,X)| < M < \infty$, we have $(\mathbb{E}_{A|X,Q} - \mathbb{\hat{E}}_{A|X,\hat{Q}}) g(A,X) \le 4\eta M \overset{p}{\to} 0$.
    
\end{proof}

\begin{lemma}
Given $\theta$, $\mathbb{ E} \mid \hat{d}_{\theta}^{opt}(\bm{X}) - d_{\theta}^{opt}(\bm{X})  \mid  \xrightarrow{p} 0 $, where $\hat{d}_{\theta}^{opt}(\bm{x}) = \argmax_{a\in \mathcal{A}} \hat{Q}_{\theta}(\bm{x},a)$, 
$d_{\theta}^{opt}(\bm{x}) = \argmax_{a\in \mathcal{A}} Q_{\theta}(\bm{x},a)$.
\end{lemma}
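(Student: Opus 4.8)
The plan is to run the standard argmax-continuity argument: uniform convergence of the estimated objective $\hat{Q}_{\theta}$ transfers to convergence of its maximizer, provided the true maximizer is well separated in the sense of Assumption \ref{identifyQ}. First I would fix $\bm{x}$ and abbreviate $\hat{a}_n := \hat{d}_{\theta}^{opt}(\bm{x}) = \argmax_{a\in\mathcal{A}} \hat{Q}_{\theta}(\bm{x},a)$ and $a^* := d_{\theta}^{opt}(\bm{x})$. Set $\epsilon_n := \sup_{\bm{x}\in\mathcal{X}}\sup_{a\in\mathcal{A}} |Q_{\theta}(\bm{x},a) - \hat{Q}_{\theta}(\bm{x},a)|$, which by Assumption \ref{consistentQ} satisfies $\epsilon_n \xrightarrow{p} 0$. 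Using that $\hat{a}_n$ maximizes $\hat{Q}_{\theta}(\bm{x},\cdot)$ together with two applications of the uniform bound,
\[
Q_{\theta}(\bm{x},\hat{a}_n) \geq \hat{Q}_{\theta}(\bm{x},\hat{a}_n) - \epsilon_n \geq \hat{Q}_{\theta}(\bm{x},a^*) - \epsilon_n \geq Q_{\theta}(\bm{x},a^*) - 2\epsilon_n = Q_{\theta}(\bm{x}, d_{\theta}^{opt}(\bm{x})) - 2\epsilon_n .
\]

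Since $\epsilon_n \xrightarrow{p} 0$, this yields $\liminf_{n} Q_{\theta}(\bm{x}, \hat{a}_n) \geq Q_{\theta}(\bm{x}, d_{\theta}^{opt}(\bm{x}))$, so the hypothesis of Assumption \ref{identifyQ} holds for the (random) sequence $\hat{a}_n$. Invoking that assumption gives the pointwise conclusion $|\hat{d}_{\theta}^{opt}(\bm{x}) - d_{\theta}^{opt}(\bm{x})| \xrightarrow{p} 0$ for each $\bm{x} \in \mathcal{X}$.

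Second I would upgrade this pointwise-in-$\bm{x}$ statement to convergence of the integrated quantity $\mathbb{E}|\hat{d}_{\theta}^{opt}(\bm{X}) - d_{\theta}^{opt}(\bm{X})|$. Write $g_n(\bm{x}) := |\hat{d}_{\theta}^{opt}(\bm{x}) - d_{\theta}^{opt}(\bm{x})|$; because $\mathcal{A}$ is compact (Assumption \ref{general}), $g_n$ is bounded by $C := \mathrm{diam}(\mathcal{A}) < \infty$ uniformly in $\bm{x}$ and $n$. For each fixed $\bm{x}$, the pointwise convergence $g_n(\bm{x}) \xrightarrow{p} 0$ together with the uniform bound gives $\mathbb{E}[g_n(\bm{x})] \to 0$ by bounded convergence. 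Taking expectation over a fresh $\bm{X}$ and applying Fubini and dominated convergence (the integrand is bounded by $C$) yields $\mathbb{E}\big[\mathbb{E}_{\bm{X}} g_n(\bm{X})\big] \to 0$. Since $\mathbb{E}_{\bm{X}} g_n(\bm{X}) \geq 0$, Markov's inequality then delivers $\mathbb{E}_{\bm{X}} g_n(\bm{X}) \xrightarrow{p} 0$, which is the claim.

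The hard part will be the first step, namely ensuring the argmax itself converges. Uniform convergence of $\hat{Q}_{\theta}$ alone is insufficient, as a nearly flat or multi-modal objective could admit a wandering maximizer; this is precisely what Assumption \ref{identifyQ} rules out, and the verification reduces to checking that the $\liminf$ inequality derived above supplies its hypothesis. The remaining passage from pointwise to integrated convergence is routine measure theory once compactness of $\mathcal{A}$ provides the dominating constant $C$.
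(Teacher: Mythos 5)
Your proof is correct and takes essentially the same route as the paper: the paper disposes of this lemma in one line by citing its Lemma~\ref{Kosorok2.12} (Kosorok 2008, Theorem 2.12) with exactly the three hypotheses you verify — the identifiability condition of Assumption~\ref{identifyQ}, the fact that $\hat{d}_{\theta}^{opt}(\bm{x})$ maximizes $\hat{Q}_{\theta}(\bm{x},\cdot)$, and the uniform consistency of $\hat{Q}_{\theta}$ from Assumption~\ref{consistentQ} — while you simply reprove that cited lemma inline via the standard $2\epsilon_n$ sandwich. Your second step, upgrading the pointwise-in-$\bm{x}$ convergence of the argmax to convergence of $\mathbb{E}_{\bm{X}}\,|\hat{d}_{\theta}^{opt}(\bm{X})-d_{\theta}^{opt}(\bm{X})|$ using compactness of $\mathcal{A}$, Fubini, dominated convergence and Markov's inequality, is a (correct) detail the paper omits entirely, so your write-up is if anything more complete than the original.
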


\begin{proof}
    Because (i) $Q_{\theta}$ is identifiable (Assumption \ref{identifyQ}), (ii) $\hat{d}_{\theta}^{opt}(\bm{x}) = \argmax_{a\in \mathcal{A}} \hat{Q}_{\theta}(\bm{x},a)$ and (iii) $\sup_{a \in \mathcal{A}} \mid Q_{\theta}(\bm{x},a) - \hat{Q}_{\theta}(\bm{x},a)\mid \xrightarrow{p} 0$, uniformly for $\bm{x} \in \mathcal{X}$ (Assumption \ref{consistentQ}), we can apply lemma \ref{Kosorok2.12} to get the conclusion directly.
\end{proof}

\begin{proof}[\textbf{Proof of Theorem \ref{consistency}}]
    We firstly introduce the following notations: true score function is $m(\theta, \beta) := m(\bm{X},A; \theta, \beta) = \beta  Q_{\theta}(\bm{X},A) - 
    log\int_{\mathcal{A}} exp\{\beta   Q_{\theta}(\bm{X},t) \} d\nu(t)$, estimated score function with fitted $Q$ is $\hat{m}(\theta, \beta) := \hat{m}(\bm{X},A; \theta, \beta) = \beta  \hat{Q}_{\theta}(\bm{X},A) - 
    log\int_{\mathcal{A}} exp\{\beta   \hat{Q}_{\theta}(\bm{X},t) \} d\nu(t)$. The empirical expectation and maximal likelihood estimator are defined by $M_n(\theta, \beta) = \mathbb{E}_n \hat{m}(\theta, \beta) $ and $M_n( \hat{\theta}, \hat{\beta}) = \max M_n(\theta, \beta) $. The true expectation is $M( \theta, \beta) = \mathbb{E}  m(\theta, \beta) $. To complete the proof, we will take three steps.

    \paragraph{1,} $\sup_{(\theta,\beta)} \mid (\mathbb{E}_n - \mathbb{E}) \hat{m}(\bm{X},A; \theta, \beta) \mid \xrightarrow{p} 0$.

    Assumption \ref{continuous_log} is directly applied when treatment doses can take on any continuous value. In the case where treatment only consists of a finite set of dose options $(a_1,a_2,\dots,a_m)$, we can express $\hat{m}(\theta,\beta)$ as $ \beta  \hat{Q}_{\theta}(\bm{X},A) - 
    log \sum_{j=1}^{m} exp\{\beta   \hat{Q}_{\theta}(\bm{X},a_j) \} + log(m)$, and we will demonstrate that Assumption \ref{continuous_log} is automatically satisfied in this case.
    
    Because $\hat{Q}_{\theta}(\bm{X},A) = w(\bm{X};\theta)[\hat{Q}_Y(\bm{X},A) - \hat{Q}_Z(\bm{X},A)] + \hat{Q}_Z(\bm{X},A)$ lies in a VC class indexed by $\theta \in \mathbb{R}^p$ by Lemma 9.6 and Lemma 9.9 (viii), (vi) and (v) of Kosorok (2008). Similarly, $exp\{\beta \hat{Q}_{\theta}(\bm{X},a_j) \}, j \in \{1,2,...,m \}$, and their summation lie in a VC class indexed by $\theta \in \mathbb{R}^p$ as long as $\beta$ is bounded, which holds as long as $\mathcal{B}$ is compact. Then it follows that $\hat{m}(\bm{X},A; \theta, \beta)$ lies in a GC class indexed by $(\theta,\beta) \in \mathbb{R}^p \times \mathcal{B}$ and 
   $$\sup_{(\theta,\beta)\in \mathbb{R}^p \times \mathcal{B}} \Big| (\mathbb{E}_n - \mathbb{E}) \Big[ \beta  \hat{Q}_{\theta}(\bm{X},A) - 
    log\int_{\mathcal{A}} exp\{\beta   \hat{Q}_{\theta}(\bm{X},t) \} d\nu(t) \Big]  \Big| \xrightarrow{p} 0.$$

    \iffalse
    \paragraph{2, $\hat{M}(\theta,\beta) = \mathbb{E}  \hat{m}(\bm{X},A; \theta, \beta)$ is continuous about $\theta$ and $\beta$.} 
    
    With $\hat{Q}_{\theta}(\bm{X},A) = w(x;\theta)\hat{Q}_Y(\bm{X},A) + \{1-w(x;\theta)\}\hat{Q}_Z(\bm{X},A)$

    \fi
    
    \paragraph{2,} $\mathbb{E}  \{ m(\theta, \beta) -  \hat{m}(\theta, \beta) \} \xrightarrow{p} 0$.
    
    Denote $\Delta_{\theta}(\bm{X},A) = \hat{Q}_{\theta}(\bm{X},A) - Q_{\theta}(\bm{X},A)$. According to Assumption \ref{consistentQ}, we have that $\sup_{a} \mid \Delta_{\theta}(\bm{X},a)\mid = o_p(1)$:
    \vspace{-0.4cm}  
    \begin{align*}
        &\mid \hat{m}(\theta, \beta) - m(\theta, \beta) \mid \\
        =& | \beta \Delta_{\theta}(\bm{X},A) - log\int_{\mathcal{A}} exp\{\beta   Q_{\theta}(\bm{X},t)\}  exp\{\beta   \Delta_{\theta}(\bm{X},t)  \} d\nu(t)  + log\int_{\mathcal{A}} exp\{\beta   Q_{\theta}(\bm{X},t) \} d\nu(t) \mid\\
        \leq& \mid \beta \Delta_{\theta}(\bm{X},A) \mid  +  \mid log\  exp\{\sup_{a} \beta \Delta_{\theta}(\bm{X},a)\} |\\
        \leq& 2\beta \sup_{a} \mid \Delta_{\theta}(\bm{X},a)\mid \xrightarrow{p} 0.
    \end{align*}
Then 
\vspace{-0.8cm}  
\begin{align*}
        &\mid M(\theta,\beta) - M_n(\theta,\beta) \mid  
        =   \mid \mathbb{E}  m(\theta, \beta) - \mathbb{E}_n\hat{m}(\theta, \beta) \mid  \\
        =&   \mid \mathbb{E}  \{ m(\theta, \beta) -  \hat{m}(\theta, \beta) \} + (\mathbb{E} -\mathbb{E}_n ) \hat{m}(\theta, \beta) \mid  = o_p(1).
    \end{align*}

\paragraph{3,} $(\hat{\beta}_n,\hat{\theta}_n) \xrightarrow{p} (\beta_0,\theta_0)$.

Because $M_n( \hat{\theta}, \hat{\beta}) = \max M_n(\theta, \beta)$, $M( \theta_0, \beta_0) = \max M(\theta, \beta)$, $| M(\theta,\beta) - M_n(\theta,\beta) | = o_p(1)$, we can apply lemma \ref{Kosorok2.12} and Assumption \ref{identify} to prove the consistency of $(\hat{\beta}_n,\hat{\theta}_n)$.

\end{proof}

\begin{proof}[\textbf{Proof of Theorem \ref{value}}]

Denote $\hat{d}_{\hat{\theta}_n}(\bm{x}) = \argmax_{a} \hat{Q}_{\hat{\theta}_n}(\bm{x},a)$ and  $d_0(\bm{x}) = \argmax_{a} Q_{\theta_0}(\bm{x},a)$ as estimated optimal treatment policy and true optimal policy respectively. 

Consider the optimality of proposed treatment assignment $\hat{d}_{\hat{\theta}_n}(\bm{x})$ and compare it with the true optimal one.

\begin{align*}
    &\left| Q_{\theta_0}(\bm{x},d_0(\bm{x})) - Q_{\theta_0}(\bm{x},\hat{d}_{\hat{\theta}_n}(\bm{x})) \right|\\
    \leq\ & \left| Q_{\theta_0}(\bm{x},d_0(\bm{x})) - \hat{Q}_{\hat{\theta}_n}(\bm{x},\hat{d}_{\hat{\theta}_n}(\bm{x})) \right| 
    + 
    \left| \hat{Q}_{\hat{\theta}_n}(\bm{x},\hat{d}_{\hat{\theta}_n}(\bm{x})) - Q_{\theta_0}(\bm{x},\hat{d}_{\hat{\theta}_n}(\bm{x})) \right|\\
    \leq\ & 2\sup_a  \left| Q_{\theta_0}(\bm{x},a) - \hat{Q}_{\hat{\theta}_n}(\bm{x},a)  \right|\\
    \leq\ & 2\sup_a \Big[ \left| Q_{\theta_0}(\bm{x},a) - Q_{\hat{\theta}_n}(\bm{x},a) \right|
    +
     \left| Q_{\hat{\theta}_n}(\bm{x},a) - \hat{Q}_{\hat{\theta}_n}(\bm{x},a) \right| \Big] \\
    =\ &  2\sup_a \left|[w(\bm{x}; \theta_0 ) - w(\bm{x}; \hat{\theta}_n )][Q_Y(\bm{x},a) - Q_Z(\bm{x},a)]\right| + o_p(1)  \\
    =\ & o_p(1).
\end{align*}
Where the first and the third inequalities are using triangle inequality; the second inequality is because $|\sup_a f(a) - \sup_a g(a)| \leq \sup_a|f(a) - g(a)|$; the fourth equation results from consistency of the outcome model (Assumption \ref{consistentQ}); the last equation is because of the bounded first derivative of $w(\bm{x}; \theta)$ and consistency of $\hat{\theta}_n$. This property holds uniformly for $\forall \bm{x} \in \mathcal{X}$.

$$ V_{\theta_0}(d_0) -  V_{\theta_0}(\hat{d}_{\hat{\theta}_n}) = \mathbb{E}_{\bm{X}} \left[ Q_{\theta_0}(\bm{X},d_0(\bm{X})) - Q_{\theta_0}(\bm{X},\hat{d}_{\hat{\theta}_n}(\bm{X})) \right] = o_p(1)$$

\end{proof}

\begin{proof}[\textbf{Proof of Theorem \ref{normal}}]

We will split the log likelihood into two parts:
\vspace{-0.8cm}  
\begin{align}
  &\hat{l}_n(\hat{\theta}_n, \hat{\beta}_n) -   \hat{l}_n(\theta_0, \beta_0)  \\
  =\ & \sum_{i=1}^{n} \left[   \hat{\beta}_n \hat{Q}_{\hat{\theta}_n}(\bm{X}_i,A_i) -log \int_{\mathcal{A}} exp\{  \hat{\beta}_n \hat{Q}_{\hat{\theta}_n}(\bm{X}_i,z) \} dz 
  \right] - \hat{l}_n(\theta_0, \beta_0) \\
  =\ & \sum_{i=1}^{n}  \left[  \hat{\beta}_n \hat{Q}_{\hat{\theta}_n}(\bm{X}_i,A_i) - \beta_0 \hat{Q}_{\theta_0}(\bm{X}_i,A_i) \right]\\
  & -\ \sum_{i=1}^{n} \left[   log \int_{\mathcal{A}} exp\{  \hat{\beta}_n \hat{Q}_{\hat{\theta}_n}(\bm{X}_i,z) \} dz -
  log \int_{\mathcal{A}} exp\{ \beta_0 \hat{Q}_{\theta_0}(\bm{X}_i,z) \} dz 
  \right]
\end{align}

For the first part, we have that
\vspace{-0.4cm}  
\begin{align*}
&\sum_{i=1}^{n}    \hat{\beta}_n \hat{Q}_{\hat{\theta}_n}(\bm{X}_i,A_i) - \beta_0 \hat{Q}_{\theta_0}(\bm{X}_i,A_i)   \\
=\ &
\Big( 
\sum_{i=1}^{n}  \hat{Q}_{\theta_0}(\bm{X}_i,A_i) ,  \sum_{i=1}^{n} \beta_0 \hat{R}(\bm{X}_i,A_i)w'(\bm{X}_i;\theta_0)    
\Big) (\hat{\beta}_n - \beta_0, \hat{\theta}_n - \theta_0)^T\\
& + \frac{1}{2}(\hat{\beta}_n - \beta_0, \hat{\theta}_n - \theta_0)\begin{pmatrix}
0 & \sum_{i=1}^{n} \hat{R}_i w'_i(\theta) \\
\sum_{i=1}^{n}  \hat{R}_i w'_i(\theta) & 
\sum_{i=1}^{n} \beta \hat{R}_i w''_i(\theta)
\end{pmatrix}\Bigg|_{ \beta= \beta_1^*,\theta= \theta_1^*}
\begin{pmatrix}
\hat{\beta}_n - \beta_0 \\
 \hat{\theta}_n - \theta_0
\end{pmatrix}
\end{align*}
where $(\beta_1^*,\theta_1^*)$ are intermediate values between $(\beta_0,\theta_0)$ and $(\hat{\beta}_n,\hat{\theta}_n)$ according to two variable Taylor expansion, $\hat{R}_i = \hat{R}(\bm{X}_i,A_i)$, $w'_i(\theta^*) = w'(\bm{X}_i;\theta^*)$,$w''_i(\theta^*) = w''(\bm{X}_i;\theta^*)$,

Similarly, for the second part:
\vspace{-0.4cm}  
\begin{align*}
  &  \sum_{i=1}^{n} \left[   log \int_{\mathcal{A}} exp\{  \hat{\beta}_n \hat{Q}_{\hat{\theta}_n}(\bm{X}_i,z) \} dz -
  log \int_{\mathcal{A}} exp\{ \beta_0 \hat{Q}_{\theta_0}(\bm{X}_i,z) \} dz 
  \right] 
  = F(\hat{\beta}_n,\hat{\theta}_n) - F(\beta_0,\theta_0) \\
  =\ & (\frac{\partial F}{\partial \beta}, \frac{\partial F}{\partial \theta} )\bigg|_{\beta = \beta_0, \theta= \theta_0} (\hat{\beta}_n - \beta_0, \hat{\theta}_n - \theta_0)^T + \frac{1}{2}(\hat{\beta}_n - \beta_0, \hat{\theta}_n - \theta_0)
\begin{pmatrix}
F_{\beta \beta} & F_{\beta \theta} \\
 F_{\beta \theta} & F_{\theta \theta}
\end{pmatrix}\Bigg|_{ \beta= \beta_2^*,\theta= \theta_2^*}
\begin{pmatrix}
\hat{\beta}_n - \beta_0 \\
 \hat{\theta}_n - \theta_0
\end{pmatrix}
\end{align*}
where $(\beta_2^*,\theta_2^*)$ are intermediate values between $(\beta_0,\theta_0)$ and $(\hat{\beta}_n,\hat{\theta}_n)$ according to two variable Taylor expansion. The derivative functions are as follows.

\vspace{-0.4cm}  

$$\frac{\partial F}{\partial \beta} = 
 \sum_{i=1}^{n}  \mathbb{E}_{A\mid \bm{X}_i,\beta,\theta, \hat{Q}} \left[ \hat{Q}_{\theta}(\bm{X}_i,A) \mid \bm{X}_i \right],\ \frac{\partial F}{\partial \theta}  = 
 \sum_{i=1}^{n}  \mathbb{E}_{A\mid \bm{X}_i,\beta,\theta, \hat{Q}} \left[ \beta \hat{R}(\bm{X}_i,A)w'(\bm{X}_i;\theta)   \mid \bm{X}_i \right]$$
\vspace{-0.4cm}  
$$F_{\beta\beta} = \frac{\partial^2 F}{\partial \beta^2}   =  \sum_{i=1}^{n}  var_{A\mid \bm{X}_i,\beta,\theta, \hat{Q}} \left[ \hat{Q}_{\theta}(\bm{X}_i,A)   \mid \bm{X}_i \right]$$
\vspace{-0.4cm}  
$$F_{\theta\theta} =\frac{\partial^2 F}{\partial \theta^2}  = 
 \sum_{i=1}^{n}  var_{A\mid \bm{X}_i,\beta,\theta, \hat{Q}} \left[ \beta \hat{R}_iw'(\theta)   \mid \bm{X}_i \right] + 
 \sum_{i=1}^{n}  \mathbb{E}_{A\mid \bm{X}_i,\beta,\theta, \hat{Q}} \left[ \beta \hat{R}_iw''(\theta)   \mid \bm{X}_i \right]$$
\vspace{-0.4cm}  
 $$F_{\beta\theta} = \frac{\partial^2 F}{\partial \theta\partial \beta}  = 
 \sum_{i=1}^{n}  cov_{A\mid \bm{X}_i,\beta,\theta, \hat{Q}} \left[ \hat{Q}_{\theta}(\bm{X}_i,A), \beta \hat{R}_iw'(\theta)   \mid \bm{X}_i \right] + 
 \sum_{i=1}^{n}  \mathbb{E}_{A\mid \bm{X}_i,\beta,\theta, \hat{Q}} \left[ \hat{R}_iw'(\theta)   \mid \bm{X}_i \right]$$

Get back to the original log likelihood expression, we rephrase it as a quadratic function of $(\hat{\beta}_n,\hat{\theta}_n)$.
\vspace{-0.4cm}  
\begin{align*}
  &\hat{l}_n(\hat{\theta}_n, \hat{\beta}_n) -   \hat{l}_n(\theta_0, \beta_0)  \\
=\ & \Big( 
\sum_{i=1}^{n}  \hat{Q}_{\theta_0}(\bm{X}_i,A_i) ,  \sum_{i=1}^{n} \beta_0 \hat{R}(\bm{X}_i,A_i)w'(\bm{X}_i;\theta_0)    
\Big) (\hat{\beta}_n - \beta_0, \hat{\theta}_n - \theta_0)^T\\
&+ \frac{1}{2}(\hat{\beta}_n - \beta_0, \hat{\theta}_n - \theta_0)
\begin{pmatrix}
0 & \sum_{i=1}^{n} \hat{R}_i w'_i(\theta) \\
\sum_{i=1}^{n}  \hat{R}_i w'_i(\theta) & 
\sum_{i=1}^{n} \beta \hat{R}_i w''_i(\theta)
\end{pmatrix}\Bigg|_{ \beta= \beta_1^*,\theta= \theta_1^*}
\begin{pmatrix}
\hat{\beta}_n - \beta_0 \\
 \hat{\theta}_n - \theta_0
\end{pmatrix}\\
&- (\frac{\partial F}{\partial \beta}, \frac{\partial F}{\partial \theta} )\bigg|_{ \theta= \theta_0} (\hat{\beta}_n - \beta_0, \hat{\theta}_n - \theta_0)^T \\
&- \frac{1}{2}(\hat{\beta}_n - \beta_0, \hat{\theta}_n - \theta_0)
\begin{pmatrix}
F_{\beta \beta} & F_{\beta \theta} \\
 F_{\beta \theta} & F_{\theta \theta}
\end{pmatrix}\Bigg|_{ \beta= \beta_2^*,\theta= \theta_2^*}
\begin{pmatrix}
\hat{\beta}_n - \beta_0 \\
 \hat{\theta}_n - \theta_0
\end{pmatrix}\\
=\  & (A_1,A_2) (\sqrt{n}(\hat{\beta}_n - \beta_0), \sqrt{n}(\hat{\theta}_n - \theta_0))^T  \\
& - \frac{1}{2}(\sqrt{n}(\hat{\beta}_n - \beta_0), \sqrt{n}(\hat{\theta}_n - \theta_0)) \begin{pmatrix}
B_{11} & B_{12} \\
B_{21} & B_{22}
\end{pmatrix}(\sqrt{n}(\hat{\beta}_n - \beta_0), \sqrt{n}(\hat{\theta}_n - \theta_0))^T
\end{align*}

$$A_1 =  n^{-\frac{1}{2}} \sum_{i=1}^{n}  \Big\{ \hat{Q}_{\theta_0}(\bm{X}_i,A_i)  -     \mathbb{E}_{A\mid \bm{X}_i,\beta_0,\theta_0, \hat{Q}} \left[ \hat{Q}_{\theta_0}(\bm{X}_i,A) \mid \bm{X}_i \right]  \Big\} $$
$$A_2 = n^{-\frac{1}{2}} \sum_{i=1}^{n} \Big\{ \beta_0 \hat{R}(\bm{X}_i,A_i)w'(\bm{X}_i;\theta_0) - \mathbb{E}_{A\mid \bm{X}_i,\beta_0,\theta_0, \hat{Q}} \left[ \beta_0 \hat{R}(\bm{X}_i,A)w'(\bm{X}_i;\theta)   \mid \bm{X}_i \right] \Big\} $$
$$B_{11} = \frac{1}{n}   \sum_{i=1}^{n}  var_{A\mid \bm{X}_i,\beta_2^*,\theta_2^*, \hat{Q}} \left[ \hat{Q}_{\theta}(\bm{X}_i,A)   \mid \bm{X}_i \right]$$
$$B_{22} = \frac{1}{n}\sum_{i=1}^{n}  var_{A\mid \bm{X}_i,\beta_2^*,\theta_2^*, \hat{Q}} \left[ \beta \hat{R}_iw'(\theta)   \mid \bm{X}_i \right] - \frac{1}{n} \sum_{i=1}^{n}  
\Big\{ \beta_1^* \hat{R}_iw''(\theta_1^*) -  \mathbb{E}_{A\mid \bm{X}_i,\beta_2^*,\theta_2^*, \hat{Q}} \left[ \beta \hat{R}_iw''(\theta)   \mid \bm{X}_i \right] \Big\} $$
$$B_{12} =  \frac{1}{n} \sum_{i=1}^{n}  cov_{A\mid \bm{X}_i,\beta_2^*,\theta_2^*, \hat{Q}} \left[ \hat{Q}_{\theta}(\bm{X}_i,A), \beta \hat{R}_iw'(\theta)   \mid \bm{X}_i \right] - \frac{1}{n} \sum_{i=1}^{n}  \Big\{ \hat{R}_iw'(\theta_1^*) - \mathbb{E}_{A\mid \bm{X}_i,\beta_2^*,\theta_2^*, \hat{Q}} \left[ \hat{R}_iw'(\theta)   \mid \bm{X}_i \right] \Big\}$$

Because of the convergence of $(\hat{\beta}_n,\hat{\theta}_n)$, both  $(\beta_1^*,\theta_1^*)$ and $(\beta_2^*,\theta_2^*)$ converge to $(\beta_0,\theta_0)$. Along with the twice continuously differentiable assumption of $w''(\theta)$, law of large number and lemma \ref{consistentA},

$$\begin{pmatrix}
B_{11} & B_{12} \\
B_{21} & B_{22}
\end{pmatrix} \overset{p}{\to}  B = \mathbb{E}_{\bm{X}}  Cov_{A} \Bigg[
    \begin{array}{c}
        Q_{\theta_0}(\bm{X},A) \\
 \beta_0R(\bm{X},A)w'(\bm{X};\theta_0)
    \end{array}\bigg|\  \bm{X} \Bigg]$$

Based on the quadratic expression of after Taylor expansion, the estimation $(\hat{\beta},\hat{\theta})$ is asymptotic linear: 
$$\sqrt{n}\ \Bigg( 
    \begin{array}{lc}
        \hat{\beta}_n - \beta_0 \\
 \hat{\theta}_n - \theta_0
    \end{array}
\Bigg) =  B^{-1}
\Bigg( 
    \begin{array}{c}
       n^{-\frac{1}{2}}  \sum_{i=1}^n \left\{ Q_{\theta_0}(\bm{X}_i,A_i) -\mathbb{E}\left[  Q_{\theta_0}(\bm{X}_i,A)   \mid \bm{X}_i \right] \right\} \\
n^{-\frac{1}{2}} \sum_{i=1}^n \beta_0 \left\{ R(\bm{X}_i,A_i)w'(\bm{X}_i;\theta_0) -\mathbb{E}\left[R(\bm{X}_i,A_i)w'(\bm{X}_i;\theta_0)   \mid \bm{X}_i \right] \right\}
    \end{array}
\Bigg) + o_p(1)
$$

And the asymptotic distribution of $(\hat{\beta}_n ,
 \hat{\theta}_n )$ is
$\sqrt{n}( \hat{\beta}_n - \beta_0, \hat{\theta}_n - \theta_0)
\sim N(0, B^{-1}) $.

\end{proof}

\iffalse

\fi

%\section{Theorem} \label{theorem} \input{theorem.tex}

\bibliographystyle{apalike}
\bibliography{references}

\end{document}